\newtheorem{theorem}{Theorem}
\newtheorem{lemma}{Lemma}
\newtheorem{corollary}{Corollary}
\newtheorem{proposition}{Proposition}
\newtheorem{definition}{Definition}
\newtheorem{remark}{Remark}
\newtheorem{problem}{Problem}
\begin{document}

\title{Robust Safety Filter Synthesis for Quaternion Attitude Dynamics via LMI-Based Ellipsoidal Invariant Sets}

\author{Reza Pordal, Alireza Sharifi, and Ali BaniAsad

\thanks{Manuscript received May 2026. This work received no specific
grant from any funding agency in the public, commercial, or
not-for-profit sectors.}
\thanks{R. Pordal is with the Department of Aerospace Engineering,
Sharif University of Technology, Tehran 11155-8639, Iran
(e-mail: reza.pordal77@sharif.edu).}
\thanks{A. Sharifi is with the Department of Aerospace Engineering,
Sharif University of Technology, Tehran 11155-8639, Iran
(e-mail: ar.sharifi@sharif.edu). Corresponding author.}
\thanks{A. BaniAsad is with the Department of Aerospace Engineering,
Sharif University of Technology, Tehran 11155-8639, Iran
(e-mail: ali.baniasad1999@sharif.edu).}
}

\markboth{}%
{Pordal \MakeLowercase{\textit{et al.}}: Robust Safety Filter for Quaternion Attitude Dynamics}


\maketitle

\begin{abstract} 
    We present a safety filter to guarantee constraint satisfaction on the rotation angle in the presence of disturbances. An LMI-based framework simultaneously synthesizes a maximal ellipsoidal robust controlled invariant (RCI) set and its associated state-feedback backup control law by solving a single convex semidefinite program, subject to state and input constraints. To extend this framework to nonlinear quaternion attitude dynamics, we derive exact closed-form sector bounds on the quaternion kinematic nonlinearity and analytically embed them into the LMI via the S-procedure. A smooth mixing law intervenes only as the state approaches the RCI boundary, preserving nominal performance during safe operation. This work is motivated by hierarchical aerial control architectures, where outer-loop commands can generate attitude references that drive the inner-loop attitude state unstable, a cascade failure mode that endangers the entire system. Quadrotor simulations with hierarchical controller structures under bounded disturbances confirm constraint satisfaction across three scenarios specifically designed to stress-test the cascade failure mode: set-point tracking with small initial errors, set-point tracking with large initial position errors that saturate the outer loop, and high-frequency circular trajectory following that persistently excites the inner-loop attitude dynamics.
\end{abstract}

\begin{IEEEkeywords}
Safety filter, robust controlled invariant set, linear matrix inequalities, semidefinite programming, quaternion attitude control, sector-bounded nonlinearity, constrained control, control barrier function, quadrotor.
\end{IEEEkeywords}

\section{Introduction}
\label{sec:introduction}

\begin{figure}[!t]
    \centering
    \includegraphics[width=\columnwidth]{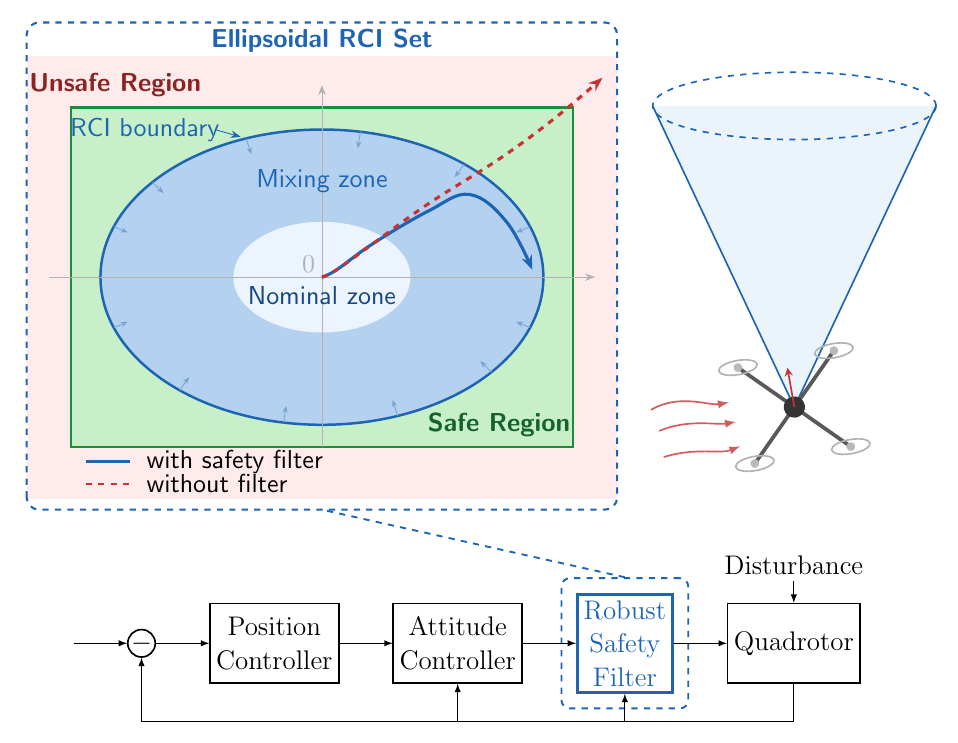}
    \caption{Overview of the proposed framework. An LMI-based ellipsoidal RCI set and backup controller are synthesised for quaternion attitude dynamics using exact sector bounds via the S-procedure. A smooth blending law enforces safety with minimal intervention, validated on three quadrotor scenarios under bounded disturbances.}
    \label{fig:graphical_abstract}
\end{figure}

\IEEEPARstart{H}{ierarchical} control architectures are widely deployed in autonomous aerial
systems, where an outer position loop generates attitude reference commands for an inner attitude
loop without direct knowledge of the inner loop's safety constraints. Due to significant initial displacement or aggressive reference trajectories, the outer loop
can issue attitude commands that drive the inner-loop state beyond the region where stability is
formally certified. For attitude dynamics governed by quaternion kinematics, this corresponds to
pushing the rotation angle and angular rates into configurations where kinematic nonlinearities
become significant, causing linearization-based inner-loop controllers to lose their stability
guarantee and potentially destabilizing the entire system. This cascade failure mode is not
addressed by safety filters acting at the outer loop alone, nor by saturating the outer-loop-issued commands. Mitigating this risk requires
a safety filter at the inner-loop level, backed by a formal invariance certificate that is valid
for the full nonlinear attitude dynamics and robust to external disturbances.

\IEEEpubidadjcol Safety filters have emerged as an effective supervisory framework for enforcing safety in
constrained autonomous systems~\cite{hsu_safety_2023, wabersich_data-driven_2023}. They operate in
parallel with nominal controllers, monitoring system states and modifying control inputs only when
necessary to prevent constraint violations. The dominant approach to safety filter design relies on
Control Barrier Functions (CBFs)~\cite{ames_control_2019}, which impose state-dependent constraints
on control inputs to enforce forward invariance of safe sets. Hamilton-Jacobi reachability methods
provide an alternative by solving partial differential equations to compute reach-avoid
sets~\cite{bansal_hamilton-jacobi_2017}, though they suffer from the curse of dimensionality. For
robust safety under bounded disturbances, CBF-based methods have been extended via
Hamilton-Jacobi-Isaacs formulations~\cite{choi_robust_2021}, input-to-state safe
CBFs~\cite{kolathaya_input--state_2019, alan_control_2022}, and robust CBF
constructions~\cite{jankovic_robust_2018}. Recent work has applied these tools directly to attitude
safety: run-time assurance frameworks enforce multiple simultaneous attitude constraints for
spacecraft maneuvering via active set invariance filters~\cite{mcquinn2024runtimeassurancesimultaneous},
and backup CBFs have been used to aggregate certified safe regions for spacecraft attitude control
and station keeping~\cite{ong2026safespaceaggregatingsafesets}.
Despite their flexibility, CBF-based approaches share a fundamental limitation: the barrier function must be constructed manually, and no systematic procedure
exists for quaternion attitude dynamics. Furthermore, guaranteeing feasibility of the CBF quadratic
program under input constraints requires an explicitly characterized control invariant set, which is
itself difficult to compute for nonlinear systems~\cite{chen_backup_2021}.

Backup CBFs address the invariant set construction challenge by implicitly defining a control
invariant set through forward simulation of a fixed backup controller~\cite{chen_backup_2021,
singletary_onboard_2022}. This approach has been demonstrated on high-speed racing drones for
geofencing and extended to systems under bounded disturbances via tube-based
formulations~\cite{wijk_disturbance-robust_2024}. However, the implicit safe set is evaluated
online by numerical integration, which carries a computational cost that grows with the prediction
horizon and state dimension. More critically, the certified region depends on the chosen backup
policy and horizon length, with no closed-form characterization and no systematic procedure for
jointly optimizing the backup controller and the invariant set subject to state and input
constraints.

In contrast, set-theoretic methods directly compute robust controlled invariant (RCI) sets,
geometric regions within which safety can be maintained by appropriate control actions under all
admissible disturbances~\cite{blanchini_set-theoretic_2015, lygeros2004reachability}. For linear
systems, ellipsoidal RCI sets can be efficiently computed using Linear Matrix Inequality (LMI)
optimization~\cite{boyd_linear_1994}. Early work by Usoro et al.~\cite{usoro1982ellipsoidal}
established fundamental results on computing ellipsoidal invariant sets that maximize disturbance
tolerance while satisfying state and input constraints, later systematized by the invariant
ellipsoid technique of Khlebnikov et al.~\cite{khlebnikov_optimization_2011}. Invariant sets have
since served as terminal safe regions in predictive safety
filters~\cite{wabersich_predictive_2021, li_learning_2023} and as motion-planning tools for
constrained attitude maneuvering~\cite{danielson2022invariant}.
However, none of these works synthesize an ellipsoidal RCI set with formal safety guarantees for
nonlinear quaternion attitude dynamics. The challenge is that the quaternion kinematic nonlinearity breaks the linearity required for standard LMI synthesis, and existing approaches either linearize around an equilibrium, losing the formal guarantee for large rotation angles, or treat the nonlinearity as a norm-bounded uncertainty, introducing conservatism that grows with the rotation envelope.

This paper addresses these challenges with three main contributions. First, we formulate safety
filter synthesis for linear systems as a unified LMI optimization that simultaneously computes a
maximal ellipsoidal RCI set and its associated state-feedback backup control law, subject to state
and input constraints, recovering and unifying results
from~\cite{blanchini_set-theoretic_2015, khlebnikov_optimization_2011} within a safety filter
synthesis context. Second, and as the primary theoretical contribution, we extend this framework to
nonlinear quaternion attitude dynamics without approximation. Exact, closed-form sector bounds on
the quaternion kinematic nonlinearity are derived analytically from the geometric orthogonality of
the radial and tangential components of the kinematic map (Lemma~2). The resulting sector bound
depends only on the maximum admissible rotation angle, requires no numerical Lipschitz estimation,
and is embedded directly into the LMI synthesis via the S-procedure~\cite{boyd_linear_1994,
polik2007survey}, yielding a single convex semidefinite program whose solution is simultaneously
the largest certifiable safe region and its associated backup control law. Third,
we demonstrate the framework on the inner loop of a hierarchical quadrotor controller, providing a
formal guarantee that the attitude state remains within its certified safe region regardless of the
commands generated by the outer position loop, directly addressing the cascade failure mode
described above. An illustration of the proposed framework is shown in
Fig.~\ref{fig:graphical_abstract}.

The remainder of this paper is organized as follows. Section~\ref{sec:preliminaries} establishes the
mathematical foundations, including the system model, viable sets, and ellipsoidal robust
controlled invariant sets. Section~\ref{sec:ellipsoidal_safety_conditions} develops the LMI formulation for linear systems
and the associated constraint satisfaction conditions. Section~\ref{sec:robust_safety_filter_synthesis} presents the
unified optimization problem and the smooth filtering strategy. Section~\ref{sec:quaternion_safety_filter}
extends the framework to nonlinear quaternion attitude dynamics via the exact sector bound and
S-procedure embedding. Section~\ref{sec:numerical_results} presents numerical simulations on a quadrotor
system, and Section~\ref{sec:conclusion} concludes with directions for future work.

\section{Preliminaries and Problem Formulation}\label{sec:preliminaries}

This section establishes the mathematical foundations for robust safety filter synthesis. We begin by introducing the system model and safety constraints, then develop the key concepts of viable sets and RCI sets that form the basis of our approach.

\subsection{System Model and Safety Constraints}

Consider a continuous-time dynamical system with control inputs and disturbances:
\begin{equation}
    \label{eq:system_dynamics}
    \dot{\bm{x}}(t) = f(\bm{x}(t), \bm{u}(t), \bm{d}(t)), \quad \bm{x}(0) = \bm{x}_0
\end{equation}
where $\bm{x}(t) \in \mathbb{R}^n$ is the state vector, $\bm{u}(t) \in \mathcal{U} \subseteq \mathbb{R}^m$ is the bounded control input,
and $\bm{d}(t) \in \mathcal{D} \subseteq \mathbb{R}^p$ is the bounded disturbance vector.
The control input set $\mathcal{U}$ and disturbance set $\mathcal{D}$ are assumed to be compact.
The function $f: \mathbb{R}^n \times \mathbb{R}^m \times \mathbb{R}^p \to \mathbb{R}^n$ is assumed to be locally Lipschitz continuous
to ensure existence and uniqueness of solutions.

The system is subject to safety constraints that define a safe operating region:
\begin{equation}
\label{eq:safety_constraints}
g_j(\bm{x}) \leq 0, \quad j = 1, \ldots, \ell
\end{equation}
where $g_j: \mathbb{R}^n \to \mathbb{R}$ are continuously differentiable constraint functions.

\begin{definition}[Safe Set]
\label{def:safe_set}
The safe set $\mathcal{S} \subseteq \mathbb{R}^n$ is defined as the region of the state space where all safety constraints are satisfied:
\begin{equation}
\mathcal{S} = \{\bm{x} \in \mathbb{R}^n : g_j(\bm{x}) \leq 0, \, j = 1, \ldots, \ell\}
\end{equation}
The safe set $\mathcal{S}$ is assumed to be non-empty and closed.
\end{definition}

\subsection{Viable Sets and Controlled Invariance}

To develop safety filters that provide formal guarantees, we require concepts from viability theory and controlled invariance. These mathematical tools allow us to characterize regions of the state space from which safety can be maintained despite uncertainties.

\begin{definition}[Viable Set \cite{lygeros2004reachability}]
\label{def:viable_set}
A set $\mathcal{V} \subseteq \mathcal{S}$ is a viable set for the system \eqref{eq:system_dynamics} if for every initial state $\bm{x}_0 \in \mathcal{V}$, there exists an admissible control law $\bm{u}(t) \in \mathcal{U}$ such that the state trajectory $\bm{x}(t)$ remains in $\mathcal{S}$ for all $t \geq 0$, regardless of the disturbance realization $\bm{d}(t) \in \mathcal{D}$.
The maximal viable set (or viability kernel), denoted as $\mathcal{V}^*$, is the largest set satisfying this property.
\end{definition}

While the maximal viable set provides the theoretical limit of states from which safety can be guaranteed, it is often difficult to compute exactly for complex systems. A more tractable approach is to find specific controlled invariant sets with known control laws.

\begin{definition}[Safe Robust Controlled Invariant Set]
\label{def:robust_controlled_invariant}
A set $\mathcal{C} \subseteq \mathcal{S}$ is a safe RCI set 
for the system \eqref{eq:system_dynamics} under a given control law $\bm{u} = \pi(\bm{x})$ with $\pi: \mathbb{R}^n \to \mathcal{U}$ if for every initial state $\bm{x}_0 \in \mathcal{C}$, the closed-loop state trajectory satisfies
\begin{equation}
\bm{x}(t) \in \mathcal{C}, \quad \forall t \geq 0, \quad \forall \bm{d}(t) \in \mathcal{D}.
\end{equation}
\end{definition}

\begin{remark}
Every safe RCI set is a subset of the maximal viable set, i.e., $\mathcal{C} \subseteq \mathcal{V}^*$.
\end{remark}

\subsection{Nagumo's Theorem for Controlled Systems}
Nagumo's theorem provides necessary and sufficient conditions for a set to be invariant under a given dynamical system. We extend this classical result to controlled systems with disturbances, which is essential for verifying the invariance of candidate sets in safety filter design.
To state Nagumo's theorem, we first need to define the tangent cone to a set at a given point.

\begin{definition}[Bouligand's Tangent Cone \cite{blanchini_set-theoretic_2015}]
\label{def:tangent_cone}
For a set $\mathcal{C} \subseteq \mathbb{R}^n$ and a point $\bm{x} \in \mathcal{C}$, the Bouligand tangent cone to $\mathcal{C}$ at $\bm{x}$, denoted $T_{\mathcal{C}}(\bm{x})$, is defined as
\begin{equation}
T_{\mathcal{C}}(\bm{x}) = \left\{\bm{v} \in \mathbb{R}^n : \liminf_{h \to 0^+} \frac{d(\bm{x} + h\bm{v}, \mathcal{C})}{h} = 0\right\}
\end{equation}
where $d(\bm{y}, \mathcal{C}) = \inf_{\bm{z} \in \mathcal{C}} \|\bm{y} - \bm{z}\|$ is the distance from point $\bm{y}$ to the set $\mathcal{C}$.
\end{definition}

\begin{remark}
The tangent cone at a boundary point $\bm{x}$ of a set $\mathcal{C}$ can be intuitively understood as the set of all possible directions in which one can "move" from $\bm{x}$ while remaining within or on the boundary of $\mathcal{C}$.
\end{remark}

For sets defined by smooth functions, the tangent cone has a particularly simple characterization:

\begin{proposition}
    \label{prop:tangent_cone_smooth}
    If the set $\mathcal{C}$ is defined as the sublevel set of a smooth function $h: \mathbb{R}^n \to \mathbb{R}$, i.e., $\mathcal{C} = \{\bm{x} \in \mathbb{R}^n : h(\bm{x}) \leq 0\}$,
    then the tangent cone at a boundary point $\bm{x} \in \partial \mathcal{C}$ (where $h(\bm{x}) = 0$ and $\nabla h(\bm{x}) \neq 0$) can be characterized as
    \begin{equation}
        T_{\mathcal{C}}(\bm{x}) = \{\bm{v} \in \mathbb{R}^n : \nabla h(\bm{x})^T \bm{v} \leq 0\}
    \end{equation}
\end{proposition}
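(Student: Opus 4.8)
The plan is to establish the set equality $T_{\mathcal{C}}(\bm{x}) = \{\bm{v} : \nabla h(\bm{x})^T \bm{v} \leq 0\}$ by a two-sided inclusion argument, using the first-order Taylor expansion of $h$ about the boundary point $\bm{x}$ together with the regularity assumption $\nabla h(\bm{x}) \neq 0$. Write $H := \{\bm{v} : \nabla h(\bm{x})^T \bm{v} \leq 0\}$ for the closed halfspace on the right-hand side. Throughout I will use that, since $h$ is $C^1$, for any direction $\bm{v}$ we have $h(\bm{x} + h\bm{v}) = h\,\nabla h(\bm{x})^T \bm{v} + o(h)$ as $h \to 0^+$.

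For the inclusion $T_{\mathcal{C}}(\bm{x}) \subseteq H$, I would argue by contraposition: suppose $\nabla h(\bm{x})^T \bm{v} = c > 0$. Then $h(\bm{x} + h\bm{v}) = hc + o(h) > 0$ for all sufficiently small $h > 0$, so $\bm{x} + h\bm{v} \notin \mathcal{C}$. To conclude $\bm{v} \notin T_{\mathcal{C}}(\bm{x})$ I need a lower bound $d(\bm{x} + h\bm{v}, \mathcal{C}) \geq \alpha h$ for some $\alpha > 0$; this follows because $h$ is Lipschitz near $\bm{x}$ with some constant $L$, so any point $\bm{z} \in \mathcal{C}$ (hence $h(\bm{z}) \leq 0$) satisfies $\|\bm{x} + h\bm{v} - \bm{z}\| \geq h(\bm{x}+h\bm{v})/L \geq (c/2L)\,h$ for small $h$, giving $\liminf_{h\to 0^+} d(\bm{x}+h\bm{v},\mathcal{C})/h \geq c/(2L) > 0$. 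Hence $\bm{v} \notin T_{\mathcal{C}}(\bm{x})$.

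For the reverse inclusion $H \subseteq T_{\mathcal{C}}(\bm{x})$, take $\bm{v}$ with $\nabla h(\bm{x})^T \bm{v} \leq 0$. If the inequality is strict, then $h(\bm{x}+h\bm{v}) = h\,\nabla h(\bm{x})^T\bm{v} + o(h) < 0$ for small $h > 0$, so $\bm{x}+h\bm{v} \in \mathcal{C}$ and the distance ratio is exactly $0$, giving $\bm{v} \in T_{\mathcal{C}}(\bm{x})$. The delicate case is the boundary $\nabla h(\bm{x})^T \bm{v} = 0$: here the first-order term vanishes and $\bm{x}+h\bm{v}$ need not lie in $\mathcal{C}$. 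The fix is to perturb the direction: since $\nabla h(\bm{x}) \neq 0$, pick $\bm{w}$ with $\nabla h(\bm{x})^T \bm{w} < 0$ (e.g. $\bm{w} = -\nabla h(\bm{x})$), set $\bm{v}_\varepsilon := \bm{v} + \varepsilon \bm{w}$, so $\nabla h(\bm{x})^T \bm{v}_\varepsilon < 0$ and thus $\bm{v}_\varepsilon \in T_{\mathcal{C}}(\bm{x})$ by the strict case; then show $d(\bm{x}+h\bm{v},\mathcal{C})/h \leq \|\bm{v} - \bm{v}_\varepsilon\| + d(\bm{x}+h\bm{v}_\varepsilon,\mathcal{C})/h$, take $\liminf$ in $h$ to get $\leq \varepsilon\|\bm{w}\|$, and let $\varepsilon \to 0^+$. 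Since the tangent cone is closed, $\bm{v} \in T_{\mathcal{C}}(\bm{x})$.

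The main obstacle is precisely this boundary case $\nabla h(\bm{x})^T\bm{v}=0$, where a naive Taylor argument is inconclusive because the sign of $h(\bm{x}+h\bm{v})$ is governed by second-order terms that we do not control (and which may indeed push the point outside $\mathcal{C}$); the density/perturbation trick combined with closedness of $T_{\mathcal{C}}(\bm{x})$ is what makes the inclusion go through cleanly. A minor technical point to handle carefully is the local Lipschitz bound on $h$ used in the first inclusion, which is automatic since $h \in C^1$ and we work in a compact neighborhood of $\bm{x}$.
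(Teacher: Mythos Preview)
Your proposal is correct and follows essentially the same Taylor-expansion route as the paper, but is considerably more thorough: the paper's proof is a three-line sketch that records the expansion $h(\bm{x}+h\bm{v}) = h\,\nabla h(\bm{x})^T\bm{v} + o(h)$, asserts the equivalence, and defers details to a reference (Aubin, \emph{Set-Valued Analysis}, Chapter~4). In particular, the paper does not spell out the Lipschitz-based distance lower bound for the inclusion $T_{\mathcal{C}}(\bm{x}) \subseteq H$, nor the perturbation/closure argument for the degenerate case $\nabla h(\bm{x})^T\bm{v}=0$ that you handle carefully; your treatment is a genuine proof rather than a sketch.
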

\begin{proof}
At a regular boundary point $\bm{x} \in \partial \mathcal{C}$ where $h(\bm{x}) = 0$ and $\nabla h(\bm{x}) \neq 0$, 
the gradient $\nabla h(\bm{x})$ is the outward normal to the boundary. 
By Taylor expansion, $h(\bm{x} + h\bm{v}) = h\nabla h(\bm{x})^T \bm{v} + o(h)$.
A direction $\bm{v}$ belongs to the tangent cone if and only if $\bm{x} + h\bm{v}$ remains in or 
approaches $\mathcal{C}$ as $h \to 0^+$, which occurs precisely when $\nabla h(\bm{x})^T \bm{v} \leq 0$.
See \cite{aubin2009set}, Chapter 4, for details.
\end{proof}
Using the tangent cone, we can now state Nagumo's theorem for controlled systems with disturbances.

\begin{theorem}
    \label{thm:nagumo_controlled}
    (Nagumo's Theorem for Controlled Systems with Disturbance \cite{blanchini_set-theoretic_2015})
    Consider the controlled system with disturbances \eqref{eq:system_dynamics}.
    Let $\mathcal{C} \subseteq \mathbb{R}^n$ be a closed set with boundary $\partial \mathcal{C}$ and tangent cone $T_{\mathcal{C}}(\bm{x})$ at $\bm{x} \in \partial \mathcal{C}$.
    The set $\mathcal{C}$ is a RCI set under a given control law $\bm{u} = \pi(\bm{x})$ with $\pi: \mathbb{R}^n \to \mathcal{U}$ if and only if for every $\bm{x} \in \partial \mathcal{C}$:
    \begin{equation}
        f(\bm{x}, \pi(\bm{x}), \bm{d}) \in T_{\mathcal{C}}(\bm{x}), \quad \forall \bm{d} \in \mathcal{D}
    \end{equation}
\end{theorem}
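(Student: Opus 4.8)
The plan is to reduce the claim to the classical Nagumo subtangentiality argument applied to the closed-loop vector field, handling the disturbance by a constant-disturbance probe for necessity and by a distance-to-$\mathcal{C}$ estimate for sufficiency. Throughout, write $F(\bm{x},\bm{d}) := f(\bm{x},\pi(\bm{x}),\bm{d})$, so that under the fixed feedback the dynamics read $\dot{\bm{x}} = F(\bm{x},\bm{d})$, and assume $\pi$ is such that for every $\bm{x}_0 \in \mathcal{C}$ and every measurable admissible $\bm{d}(\cdot)$ an absolutely continuous solution exists on $[0,\infty)$ -- which holds in particular for the Lipschitz feedback constructed in the sequel, where $F(\cdot,\bm{d})$ is Lipschitz uniformly over the compact set $\mathcal{D}$, with some constant $L$. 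The two implications are then proved separately.

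\emph{Necessity.} Suppose $\mathcal{C}$ is RCI under $\pi$. I would fix $\bm{x}\in\partial\mathcal{C}$ and $\bm{d}\in\mathcal{D}$ and apply the constant disturbance $\bm{d}(t)\equiv\bm{d}$. The resulting trajectory with $\bm{x}(0)=\bm{x}$ stays in $\mathcal{C}$, so $d(\bm{x}(h),\mathcal{C})=0$ for all $h\ge 0$; since $F(\cdot,\bm{d})$ is continuous, $\bm{x}(h) = \bm{x} + h F(\bm{x},\bm{d}) + o(h)$ as $h\to 0^{+}$. Hence
\begin{equation*}
0 \;\le\; \frac{d\bigl(\bm{x}+hF(\bm{x},\bm{d}),\,\mathcal{C}\bigr)}{h} \;\le\; \frac{\bigl\|\bm{x}+hF(\bm{x},\bm{d})-\bm{x}(h)\bigr\|}{h} \;=\; \frac{o(h)}{h} \;\longrightarrow\; 0,
\end{equation*}
which is precisely $F(\bm{x},\bm{d})\in T_{\mathcal{C}}(\bm{x})$. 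As $\bm{x}$ and $\bm{d}$ were arbitrary, the condition holds.

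\emph{Sufficiency.} Suppose $F(\bm{x},\bm{d})\in T_{\mathcal{C}}(\bm{x})$ for all $\bm{x}\in\partial\mathcal{C}$, $\bm{d}\in\mathcal{D}$. Fix $\bm{x}_0\in\mathcal{C}$, an admissible $\bm{d}(\cdot)$, and the solution $\bm{x}(\cdot)$, and let $V(t):=d(\bm{x}(t),\mathcal{C})\ge 0$, with $V(0)=0$. Since $V$ is the composition of an absolutely continuous trajectory with the $1$-Lipschitz map $\bm{y}\mapsto d(\bm{y},\mathcal{C})$, it is locally absolutely continuous, hence differentiable for a.e.\ $t$; at a.e.\ $t$ we also have $\dot{\bm{x}}(t)=F(\bm{x}(t),\bm{d}(t))$. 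The key estimate I would establish is that at every such $t$, $V'(t)\le L\,V(t)$: picking a nearest point $\bar{\bm{x}}\in\mathcal{C}$ to $\bm{x}(t)$ (so $\bar{\bm{x}}\in\partial\mathcal{C}$ when $V(t)>0$), the membership $F(\bar{\bm{x}},\bm{d}(t))\in T_{\mathcal{C}}(\bar{\bm{x}})$ furnishes a sequence $h_k\downarrow 0$ with $d\bigl(\bar{\bm{x}}+h_kF(\bar{\bm{x}},\bm{d}(t)),\mathcal{C}\bigr)=o(h_k)$; combining this with $\bm{x}(t+h_k)=\bm{x}(t)+h_kF(\bm{x}(t),\bm{d}(t))+o(h_k)$, the Lipschitz bound $\|F(\bm{x}(t),\bm{d}(t))-F(\bar{\bm{x}},\bm{d}(t))\|\le L\,V(t)$, and the triangle inequality, yields $V(t+h_k)\le V(t)(1+Lh_k)+o(h_k)$, and differentiability of $V$ at $t$ then gives $V'(t)\le L\,V(t)$ (the case $V(t)=0$ is immediate). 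Therefore $\frac{d}{dt}\bigl(e^{-Lt}V(t)\bigr)\le 0$ a.e., so $e^{-Lt}V(t)$ is non-increasing, and with $V\ge 0$, $V(0)=0$ this forces $V\equiv 0$; hence the trajectory stays in $\mathcal{C}$ for all $t\ge 0$ and all admissible disturbances, i.e.\ $\mathcal{C}$ is RCI.

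\emph{Main obstacle.} The sufficiency implication is the subtle one. The tangent-cone condition only delivers a \emph{subsequence} $h_k$ along which the distance to $\mathcal{C}$ contracts, so turning this into a genuine one-sided derivative bound for $V$ requires both the absolute continuity of $V$ -- which lets an a.e.\ ordinary-derivative estimate stand in for a more delicate upper-Dini-derivative argument -- and a transfer of the contraction estimate from the foot point $\bar{\bm{x}}$ to $\bm{x}(t)$ via a Lipschitz bound on $F(\cdot,\bm{d})$ uniform over the compact set $\mathcal{D}$; one must also discard the null set of times where $\bm{x}(\cdot)$ or $V$ fails to be differentiable. In the case $\mathcal{C}=\{\bm{x}:h(\bm{x})\le 0\}$ with $h$ smooth -- the setting used later for ellipsoidal sets -- Proposition~\ref{prop:tangent_cone_smooth} collapses the condition to $\nabla h(\bm{x})^{T}F(\bm{x},\bm{d})\le 0$ on $\partial\mathcal{C}$, and sufficiency follows more transparently by verifying $\frac{d}{dt}h(\bm{x}(t))\le 0$ whenever $h(\bm{x}(t))=0$.
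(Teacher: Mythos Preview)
Your argument is correct but takes a genuinely different route from the paper. The paper's proof is a two-line reduction: it rewrites the closed-loop system as the differential inclusion $\dot{\bm{x}}\in F(\bm{x}):=\{f(\bm{x},\pi(\bm{x}),\bm{d}):\bm{d}\in\mathcal{D}\}$ and then invokes the already-known Nagumo invariance theorem for differential inclusions (from Aubin's viability theory), which says a closed set $\mathcal{C}$ is invariant iff $F(\bm{x})\subseteq T_{\mathcal{C}}(\bm{x})$ on $\partial\mathcal{C}$. You instead give a self-contained proof: necessity by probing with a constant disturbance and reading off the tangent-cone membership from the first-order expansion of the trajectory; sufficiency by a Gr\"onwall estimate on $V(t)=d(\bm{x}(t),\mathcal{C})$, transferring the subtangentiality from the nearest boundary point back to $\bm{x}(t)$ via the uniform Lipschitz constant. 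The paper's approach is terser and offloads all the analysis to a reference; yours exposes the mechanism, makes the Lipschitz hypothesis on $F(\cdot,\bm{d})$ explicit, and does not require the reader to accept the differential-inclusion version as a black box. The care you flag about the $\liminf$ in the tangent-cone definition only delivering a subsequence, and about working at a.e.\ differentiability points of $V$, is exactly the content that the paper hides inside its citation.
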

\begin{proof}
We reformulate the controlled system with disturbances as a differential inclusion. 
For a fixed control law $\pi: \mathbb{R}^n \to \mathcal{U}$, the closed-loop system 
can be written as $\dot{\bm{x}} \in F(\bm{x}) := \{f(\bm{x}, \pi(\bm{x}), \bm{d}) : \bm{d} \in \mathcal{D}\}$.
By Nagumo's theorem for differential inclusions \cite{aubin2011viability}, 
a closed set $\mathcal{C}$ is invariant if and only if 
$F(\bm{x}) \subseteq T_{\mathcal{C}}(\bm{x})$ for all $\bm{x} \in \partial \mathcal{C}$,
which is equivalent to the stated condition.
\end{proof}

\subsection{Ellipsoidal Sets}

In this work, we focus on ellipsoidal sets due to their computational tractability and rich geometric properties that allow for efficient LMI-based synthesis procedures.

\begin{definition}[Ellipsoidal Set]
    \label{def:ellipsoidal_set}
        An ellipsoidal set $\mathcal{E} \subseteq \mathbb{R}^n$ centered at $\bm{c} \in \mathbb{R}^n$ with shape matrix $\mathbf{P} \in \mathbb{R}^{n \times n}$ is defined as
        \begin{equation}
            \mathcal{E}(\bm{c}, \mathbf{P}) = \{\bm{x} \in \mathbb{R}^n : (\bm{x} - \bm{c})^T \mathbf{P} (\bm{x} - \bm{c}) \leq 1\}
        \end{equation}
        where $\mathbf{P} \succ 0$ is a symmetric positive definite matrix. When the ellipsoid is centered at the origin ($\bm{c} = \bm{0}$), we use the simplified notation $\mathcal{E}(\mathbf{P}) = \{\bm{x} \in \mathbb{R}^n : \bm{x}^T \mathbf{P} \bm{x} \leq 1\}$.
\end{definition}

\subsection{Safety Filter Definition and Problem Statement}

With the mathematical foundations established, we now formally define the robust safety filter and state the synthesis problem addressed in this work.

\begin{definition}[Robust Safety Filter \cite{wabersich_data-driven_2023}]
\label{def:safety_filter}
Given a safe set $\mathcal{S} \subseteq \mathbb{R}^n$, a robust safety filter 
$\pi_s: \mathbb{R}^n \times \mathbb{R}^m \to \mathbb{R}^m$ modifies a nominal (desired) control input 
$\bm{u}_{\text{nom}}(t)$ to produce a filtered control input
\begin{equation}
 \bm{u}_s(t) = \pi_s(\bm{x}(t), \bm{u}_{\text{nom}}(t))
\end{equation}
that ensures the system trajectory satisfies $\bm{x}(t) \in \mathcal{S}$ for all $t \geq 0$ 
and all disturbances $\bm{d}(t) \in \mathcal{D}$, while minimally modifying the nominal input signal.
\end{definition}

\begin{remark}
In practice, a robust safety filter is often designed with respect to a RCI 
set $\mathcal{C} \subseteq \mathcal{S}$, ensuring that trajectories starting in $\mathcal{C}$ 
remain within both $\mathcal{C}$ and $\mathcal{S}$. This approach provides computational tractability while maintaining safety guarantees.
\end{remark}

\begin{problem}
\label{prob:safety_filter_synthesis}
Given the controlled system with disturbances \eqref{eq:system_dynamics}
and a safe set $\mathcal{S}$ defined by safety constraints \eqref{eq:safety_constraints},
synthesize a robust safety filter $\pi_s: \mathbb{R}^n \times \mathbb{R}^m \to \mathbb{R}^m$ that provides a constructive design procedure that is computationally tractable for real-time implementation.
The filter should ensure that the closed-loop system satisfies $\bm{x}(t) \in \mathcal{S}$ for all $t \geq 0$ and all disturbances $\bm{d}(t) \in \mathcal{D}$.
\end{problem}

\section{Ellipsoidal Set-Theoretic Safety Conditions}
\label{sec:ellipsoidal_safety_conditions}

This section develops the mathematical framework for robust safety filter synthesis using ellipsoidal sets. We begin by specializing Nagumo's theorem to ellipsoidal sets, then formulate the design problem as a tractable LMI optimization for linear systems.

\subsection{Nagumo's Theorem for Ellipsoidal Sets}

For ellipsoidal sets, the tangent cone at boundary points has a particularly elegant characterization that enables efficient computational methods.

\begin{corollary}
    \label{cor:tangent_cone_ellipsoid}
    For an ellipsoidal set $\mathcal{E}(\bm{c}, \mathbf{P})$, the tangent cone at a boundary point $\bm{x} \in \partial \mathcal{E}$ is characterized as
    \begin{equation}
        T_{\mathcal{E}}(\bm{x}) = \{\bm{v} \in \mathbb{R}^n : (\bm{x} - \bm{c})^T \mathbf{P} \bm{v} \leq 0\}
    \end{equation}
\end{corollary}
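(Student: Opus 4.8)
The plan is to obtain this as a direct specialization of Proposition~\ref{prop:tangent_cone_smooth}. First I would express the ellipsoid as a smooth sublevel set: define the (quadratic, hence $C^\infty$) function $h(\bm{x}) = (\bm{x} - \bm{c})^T \mathbf{P} (\bm{x} - \bm{c}) - 1$, so that $\mathcal{E}(\bm{c}, \mathbf{P}) = \{\bm{x} \in \mathbb{R}^n : h(\bm{x}) \leq 0\}$. A routine differentiation, using the symmetry of $\mathbf{P}$, gives $\nabla h(\bm{x}) = 2\,\mathbf{P}(\bm{x} - \bm{c})$.

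Next I would verify the regularity hypothesis of Proposition~\ref{prop:tangent_cone_smooth} at an arbitrary boundary point $\bm{x} \in \partial\mathcal{E}$. By definition of the boundary, $h(\bm{x}) = 0$, i.e.\ $(\bm{x} - \bm{c})^T \mathbf{P}(\bm{x} - \bm{c}) = 1 > 0$, which forces $\bm{x} \neq \bm{c}$. Since $\mathbf{P} \succ 0$ is invertible, $\bm{x} - \bm{c} \neq \bm{0}$ implies $\nabla h(\bm{x}) = 2\,\mathbf{P}(\bm{x} - \bm{c}) \neq \bm{0}$. Thus every boundary point of $\mathcal{E}$ is a regular point of $h$, and Proposition~\ref{prop:tangent_cone_smooth} is applicable.

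Finally, I would substitute into the proposition: $T_{\mathcal{E}}(\bm{x}) = \{\bm{v} \in \mathbb{R}^n : \nabla h(\bm{x})^T \bm{v} \leq 0\} = \{\bm{v} \in \mathbb{R}^n : 2(\bm{x} - \bm{c})^T \mathbf{P} \bm{v} \leq 0\}$, and dividing the scalar inequality by the positive constant $2$ yields exactly the claimed characterization $T_{\mathcal{E}}(\bm{x}) = \{\bm{v} \in \mathbb{R}^n : (\bm{x} - \bm{c})^T \mathbf{P} \bm{v} \leq 0\}$.

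The argument is essentially immediate, so there is no substantial obstacle; the only point requiring care is confirming that the gradient is nonvanishing on $\partial\mathcal{E}$, which is precisely where the positive definiteness of $\mathbf{P}$ enters. As an alternative one could work directly from the $\liminf$ definition of the Bouligand cone, expanding the quadratic form $h(\bm{x} + h\bm{v})$ to first order, but routing through Proposition~\ref{prop:tangent_cone_smooth} keeps the proof short and self-contained.
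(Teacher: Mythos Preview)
Your proposal is correct and follows exactly the same approach as the paper's proof, which simply invokes Proposition~\ref{prop:tangent_cone_smooth} with the choice $h(\bm{x}) = (\bm{x} - \bm{c})^T \mathbf{P} (\bm{x} - \bm{c}) - 1$. Your additional verification that $\nabla h(\bm{x}) \neq \bm{0}$ on $\partial\mathcal{E}$ via the positive definiteness of $\mathbf{P}$ is a welcome detail that the paper leaves implicit.
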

\begin{proof}
    The proof follows directly from Proposition \ref{prop:tangent_cone_smooth} by choosing $h(\bm{x}) = (\bm{x} - \bm{c})^T \mathbf{P} (\bm{x} - \bm{c}) - 1$.
\end{proof}

This geometric characterization leads to a simple invariance condition for ellipsoidal sets under controlled dynamics.

\begin{corollary}
    \label{cor:nagumo_ellipsoid}
    (Nagumo's Theorem for Ellipsoidal Sets)
    An ellipsoidal set $\mathcal{E}(\bm{c}, \mathbf{P})$ is a RCI set for the system \eqref{eq:system_dynamics} under control law $\bm{u} = \pi(\bm{x})$ if and only if for every $\bm{x} \in \partial \mathcal{E}$:
    \begin{equation}
        (\bm{x} - \bm{c})^T \mathbf{P} f(\bm{x}, \pi(\bm{x}), \bm{d}) \leq 0, \quad \forall \bm{d} \in \mathcal{D}
    \end{equation}
\end{corollary}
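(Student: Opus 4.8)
The plan is to obtain this corollary as an immediate consequence of two results already established: Theorem~\ref{thm:nagumo_controlled} (Nagumo's theorem for controlled systems with disturbance) and Corollary~\ref{cor:tangent_cone_ellipsoid} (the tangent-cone characterization for ellipsoids). First I would verify that the hypotheses of Theorem~\ref{thm:nagumo_controlled} apply with $\mathcal{C} = \mathcal{E}(\bm{c},\mathbf{P})$: the ellipsoid is the sublevel set of the continuous function $h(\bm{x}) = (\bm{x}-\bm{c})^T \mathbf{P}(\bm{x}-\bm{c}) - 1$, hence closed, and since $\mathbf{P} \succ 0$ it is in fact compact. So Theorem~\ref{thm:nagumo_controlled} gives that $\mathcal{E}(\bm{c},\mathbf{P})$ is an RCI set under $\bm{u} = \pi(\bm{x})$ if and only if $f(\bm{x},\pi(\bm{x}),\bm{d}) \in T_{\mathcal{E}}(\bm{x})$ for every $\bm{x} \in \partial\mathcal{E}$ and every $\bm{d} \in \mathcal{D}$.

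Next I would substitute the explicit form of the tangent cone. By Corollary~\ref{cor:tangent_cone_ellipsoid}, at any boundary point $\bm{x} \in \partial\mathcal{E}$ we have $T_{\mathcal{E}}(\bm{x}) = \{\bm{v} \in \mathbb{R}^n : (\bm{x}-\bm{c})^T \mathbf{P}\bm{v} \leq 0\}$. Plugging $\bm{v} = f(\bm{x},\pi(\bm{x}),\bm{d})$ into this description, the membership condition $f(\bm{x},\pi(\bm{x}),\bm{d}) \in T_{\mathcal{E}}(\bm{x})$ is literally the scalar inequality $(\bm{x}-\bm{c})^T \mathbf{P} f(\bm{x},\pi(\bm{x}),\bm{d}) \leq 0$. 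Combining this with the equivalence from the previous paragraph yields the claimed statement.

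The only point requiring a brief word of care — and the closest thing to an obstacle, though a minor one — is the regularity needed to invoke Corollary~\ref{cor:tangent_cone_ellipsoid}: that corollary (via Proposition~\ref{prop:tangent_cone_smooth}) assumes $\nabla h(\bm{x}) \neq 0$ on the boundary and that $\partial\mathcal{E}$ coincides with the zero level set $\{h = 0\}$. For the ellipsoid, $\nabla h(\bm{x}) = 2\mathbf{P}(\bm{x}-\bm{c})$; on $\partial\mathcal{E}$ one has $(\bm{x}-\bm{c})^T\mathbf{P}(\bm{x}-\bm{c}) = 1$, so $\bm{x} \neq \bm{c}$, and since $\mathbf{P} \succ 0$ this forces $\mathbf{P}(\bm{x}-\bm{c}) \neq 0$. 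Likewise, continuity of $h$ together with $\mathbf{P}\succ 0$ (so that $h$ has no interior zeros and is strictly increasing along rays from $\bm{c}$) gives $\partial\mathcal{E} = \{\bm{x} : h(\bm{x}) = 0\}$. With these facts in hand the substitution is valid and the proof is complete. I would state the argument in two or three sentences, citing Theorem~\ref{thm:nagumo_controlled} and Corollary~\ref{cor:tangent_cone_ellipsoid} directly.
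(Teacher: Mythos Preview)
Your proposal is correct and follows exactly the same route as the paper's proof, which simply states that the result follows directly from applying Theorem~\ref{thm:nagumo_controlled} with the tangent-cone characterization of Corollary~\ref{cor:tangent_cone_ellipsoid}. Your added verification of closedness and the regularity condition $\nabla h(\bm{x}) \neq 0$ on $\partial\mathcal{E}$ is more careful than the paper itself, but not a different approach.
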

\begin{proof}
    This follows directly from applying Theorem \ref{thm:nagumo_controlled} with the tangent cone characterization from Corollary \ref{cor:tangent_cone_ellipsoid}.
\end{proof}

\subsection{LMI Formulation for Linear Systems}

We now specialize our approach to linear time-invariant systems, where the ellipsoidal RCI set synthesis becomes a convex optimization problem.

Consider the linear system with bounded disturbances:
\begin{equation}
    \label{eq:linear_system}
    \dot{\bm{x}}(t) = \mathbf{A}\bm{x}(t) + \mathbf{B}\bm{u}(t) + \mathbf{E}\bm{d}(t)
\end{equation}
where $\mathbf{A} \in \mathbb{R}^{n \times n}$, $\mathbf{B} \in \mathbb{R}^{n \times m}$, and $\mathbf{E} \in \mathbb{R}^{n \times p}$ are system matrices, and the disturbance satisfies $\|\bm{d}(t)\|_2 \leq 1$ for all $t \geq 0$.

For linear systems with ellipsoidal sets centered at the origin, Nagumo's condition simplifies considerably.

\begin{corollary}
    \label{cor:nagumo_linear_ellipsoid}
    (Invariance Condition for Linear Systems)
    An ellipsoidal set $\mathcal{E}(\mathbf{P})$ centered at the origin is RCI for the linear system \eqref{eq:linear_system} if there exists a control law $\bm{u} = \pi(\bm{x})$ such that
    \begin{align}
        \bm{x}^T \mathbf{P} (\mathbf{A}\bm{x} + \mathbf{B}\pi(\bm{x}) + \mathbf{E}\bm{d}) \leq 0
    \end{align}
    for all $\bm{d}$ with $\|\bm{d}\|_2 \leq 1$ and all $\bm{x} \in \partial\mathcal{E}(\mathbf{P})$.
\end{corollary}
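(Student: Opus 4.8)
The plan is to obtain this statement as an immediate specialization of Corollary~\ref{cor:nagumo_ellipsoid} (Nagumo's theorem for ellipsoidal sets) to the affine dynamics \eqref{eq:linear_system} and the centered ellipsoid $\mathcal{E}(\mathbf{P}) = \mathcal{E}(\bm{0},\mathbf{P})$. Concretely, I would set $\bm{c} = \bm{0}$, substitute $f(\bm{x},\bm{u},\bm{d}) = \mathbf{A}\bm{x} + \mathbf{B}\bm{u} + \mathbf{E}\bm{d}$, and take the disturbance set to be the closed Euclidean unit ball $\mathcal{D} = \{\bm{d}\in\mathbb{R}^p : \|\bm{d}\|_2 \le 1\}$, as stipulated in \eqref{eq:linear_system}.

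First I would verify that the hypotheses needed to invoke the earlier results hold in this setting. The map $f$ is affine, hence globally Lipschitz, so \eqref{eq:linear_system} is a special case of \eqref{eq:system_dynamics}; the set $\mathcal{D}$ is compact; and since $\mathbf{P}\succ 0$, the ellipsoid $\mathcal{E}(\mathbf{P})$ is compact with boundary $\partial\mathcal{E}(\mathbf{P}) = \{\bm{x} : \bm{x}^T\mathbf{P}\bm{x} = 1\}$. At any such boundary point, the defining function $h(\bm{x}) = \bm{x}^T\mathbf{P}\bm{x}-1$ has gradient $\nabla h(\bm{x}) = 2\mathbf{P}\bm{x} \neq \bm{0}$ (because $\bm{x}\neq\bm{0}$ there), so the regularity condition of Proposition~\ref{prop:tangent_cone_smooth} is met at \emph{every} boundary point, and Corollary~\ref{cor:tangent_cone_ellipsoid} with $\bm{c}=\bm{0}$ yields $T_{\mathcal{E}}(\bm{x}) = \{\bm{v}\in\mathbb{R}^n : \bm{x}^T\mathbf{P}\bm{v}\le 0\}$.

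Then I would apply Corollary~\ref{cor:nagumo_ellipsoid}: for a fixed control law $\pi$, the ellipsoid $\mathcal{E}(\mathbf{P})$ is RCI if and only if $f(\bm{x},\pi(\bm{x}),\bm{d}) \in T_{\mathcal{E}}(\bm{x})$ for all $\bm{x}\in\partial\mathcal{E}(\mathbf{P})$ and all $\bm{d}\in\mathcal{D}$; by the tangent-cone formula above this reads exactly $\bm{x}^T\mathbf{P}(\mathbf{A}\bm{x}+\mathbf{B}\pi(\bm{x})+\mathbf{E}\bm{d}) \le 0$ for all $\|\bm{d}\|_2\le 1$ and all $\bm{x}\in\partial\mathcal{E}(\mathbf{P})$, which is the claimed inequality. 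Since the corollary quantifies the control law existentially and asserts only the ``if'' direction, exhibiting a single feedback $\pi$ with $\pi(\bm{x})\in\mathcal{U}$ that satisfies this inequality suffices; the LMI synthesis in the sequel will produce such a $\pi$ explicitly as a linear feedback.

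There is no genuinely hard step here: the result is essentially a change of notation from the general ellipsoidal case. The only points demanding care are matching the quantifier structure (one inequality that must hold simultaneously for all boundary states and all admissible disturbances) and confirming that boundary regularity holds uniformly on $\partial\mathcal{E}(\mathbf{P})$, so that the smooth tangent-cone characterization of Proposition~\ref{prop:tangent_cone_smooth} applies pointwise; the implicit requirement $\pi(\bm{x})\in\mathcal{U}$ inherited from Corollary~\ref{cor:nagumo_ellipsoid} is what couples this condition to the input constraints treated later.
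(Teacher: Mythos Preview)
Your proposal is correct and matches the paper's own proof, which is simply ``direct substitution of the linear system dynamics $f(\bm{x},\bm{u},\bm{d}) = \mathbf{A}\bm{x} + \mathbf{B}\bm{u} + \mathbf{E}\bm{d}$ into Corollary~\ref{cor:nagumo_ellipsoid}.'' Your additional verification of Lipschitz continuity, compactness of $\mathcal{D}$, and boundary regularity is more detailed than what the paper spells out, but the underlying argument is identical.
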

\begin{proof}
    Direct substitution of the linear system dynamics $f(\bm{x}, \bm{u}, \bm{d}) = \mathbf{A}\bm{x} + \mathbf{B}\bm{u} + \mathbf{E}\bm{d}$ 
    into Corollary \ref{cor:nagumo_ellipsoid}.
\end{proof}

To convert this condition into a tractable optimization problem, we employ the S-procedure lemma:

\begin{lemma}[S-procedure \cite{boyd_linear_1994}]
    \label{lem:s_procedure}
    Consider quadratic forms $f_i(\bm{x}) = \bm{x}^T \mathbf{A}_i \bm{x}$ for $i = 0, 1, \ldots, m$, with $\mathbf{A}_i = \mathbf{A}_i^T$. If there exist $\tau_i \geq 0$ for $i = 1, \ldots, m$ such that
    \begin{equation}
        \mathbf{A}_0 \preceq \sum_{i=1}^{m} \tau_i \mathbf{A}_i, \qquad \alpha_0 \geq \sum_{i=1}^{m} \tau_i \alpha_i,
    \end{equation}
    then the constraints $f_i(\bm{x}) \leq \alpha_i$ for $i = 1, \ldots, m$ imply $f_0(\bm{x}) \leq \alpha_0$.
\end{lemma}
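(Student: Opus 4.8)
The plan is to establish the implication directly by a nonnegative linear combination argument, which is precisely the ``sufficiency'' direction of the S-procedure and requires no constraint qualification. I would fix an arbitrary $\bm{x} \in \mathbb{R}^n$ that satisfies all the hypotheses $f_i(\bm{x}) = \bm{x}^T \mathbf{A}_i \bm{x} \leq \alpha_i$ for $i = 1, \ldots, m$, and show that $f_0(\bm{x}) \leq \alpha_0$ follows.

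First I would exploit the nonnegativity of the multipliers: since each $\tau_i \geq 0$, multiplying the constraint $\bm{x}^T \mathbf{A}_i \bm{x} \leq \alpha_i$ by $\tau_i$ preserves the direction of the inequality, and summing over $i = 1, \ldots, m$ gives $\bm{x}^T \bigl(\sum_{i=1}^m \tau_i \mathbf{A}_i\bigr) \bm{x} \leq \sum_{i=1}^m \tau_i \alpha_i$. Next I would invoke the matrix inequality hypothesis $\mathbf{A}_0 \preceq \sum_{i=1}^m \tau_i \mathbf{A}_i$, which by the definition of the Loewner (positive semidefinite) order means $\bm{x}^T \mathbf{A}_0 \bm{x} \leq \bm{x}^T \bigl(\sum_{i=1}^m \tau_i \mathbf{A}_i\bigr) \bm{x}$ for every vector $\bm{x}$. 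Chaining these two bounds with the scalar hypothesis $\sum_{i=1}^m \tau_i \alpha_i \leq \alpha_0$ produces $f_0(\bm{x}) = \bm{x}^T \mathbf{A}_0 \bm{x} \leq \sum_{i=1}^m \tau_i \alpha_i \leq \alpha_0$, which is the assertion.

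There is essentially no obstacle here: the statement is a one-line consequence of taking a conic (nonnegative) combination of quadratic inequalities, and the only subtlety worth flagging is that this is the easy direction. The converse — that feasibility of the implication itself guarantees the existence of such multipliers $\tau_i$ — is the genuinely deep part of the S-procedure (lossless only under Slater-type regularity, and in general only for a single constraint) and is not needed in what follows, since the LMI synthesis uses the multipliers solely as sufficient certificates. I would therefore keep the proof to the short chain of inequalities above, noting explicitly that no regularity condition on the constraint set is invoked.
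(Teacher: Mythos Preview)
Your argument is correct: the sufficiency direction of the S-procedure is exactly the nonnegative-combination chain you wrote, and your remark that only this easy direction is needed downstream is apt. The paper itself does not prove the lemma at all---it simply cites Boyd et al.\ \cite{boyd_linear_1994}---so your proposal in fact supplies more than the paper does.
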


\begin{proof}
    The proof can be found in \cite{boyd_linear_1994}.
\end{proof}

\begin{remark}
When one or more inequalities become equalities, the nonnegativity requirement on the corresponding $\tau_i$ is omitted \cite{khlebnikov_optimization_2011}.
\end{remark}

Using the S-procedure, we can now formulate the RCI set synthesis as an LMI optimization problem.

\begin{theorem}
    \label{thm:lmi_rfcis}
    (LMI Formulation for RCI Set Synthesis)
    The ellipsoidal set $\mathcal{E}(\mathbf{P})$ is RCI for the linear system \eqref{eq:linear_system} with state feedback control law $\bm{u} = \mathbf{K}\bm{x}$ if there exist matrices $\mathbf{Q} = \mathbf{P}^{-1} \succ 0$ and $\mathbf{Y}$ such that
    \begin{equation}
        \label{eq:main_lmi}
        \begin{bmatrix}
            \mathbf{A}\mathbf{Q} + \mathbf{Q}\mathbf{A}^T + \mathbf{B}\mathbf{Y} + \mathbf{Y}^T\mathbf{B}^T + \lambda\mathbf{Q} & \mathbf{E} \\
            \mathbf{E}^T & -\lambda \mathbf{I}
        \end{bmatrix} \preceq 0
    \end{equation}
    for some scalar $\lambda > 0$. The feedback gain is recovered as $\mathbf{K} = \mathbf{Y}\mathbf{Q}^{-1}$.
\end{theorem}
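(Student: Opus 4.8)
The plan is to push the geometric invariance condition already established in Corollary~\ref{cor:nagumo_linear_ellipsoid} through three standard steps: (i) vectorize it into a single quadratic form in the stacked variable $[\bm{x}^{T}\ \bm{d}^{T}]^{T}$, (ii) apply the S-procedure to turn the quadratic implication into a matrix inequality in $\mathbf{P}$, and (iii) perform a congruence change of variables to arrive at \eqref{eq:main_lmi}. Since the theorem only claims sufficiency, no losslessness of the S-procedure is needed.

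Concretely, fix the feedback law $\bm{u}=\mathbf{K}\bm{x}$ and write $\mathbf{A}_{\mathrm{cl}}=\mathbf{A}+\mathbf{B}\mathbf{K}$. By Corollary~\ref{cor:nagumo_linear_ellipsoid}, $\mathcal{E}(\mathbf{P})$ is RCI provided $\bm{x}^{T}\mathbf{P}(\mathbf{A}_{\mathrm{cl}}\bm{x}+\mathbf{E}\bm{d})\le 0$ whenever $\bm{x}^{T}\mathbf{P}\bm{x}=1$ and $\bm{d}^{T}\bm{d}\le 1$. Stacking $\bm{z}=[\bm{x}^{T}\ \bm{d}^{T}]^{T}$, twice the left-hand side is the quadratic form $\bm{z}^{T}\mathbf{M}_{0}\bm{z}$ with
\[
\mathbf{M}_{0}=\begin{bmatrix}\mathbf{P}\mathbf{A}_{\mathrm{cl}}+\mathbf{A}_{\mathrm{cl}}^{T}\mathbf{P} & \mathbf{P}\mathbf{E}\\ \mathbf{E}^{T}\mathbf{P} & \mathbf{0}\end{bmatrix},
\]
while the two constraints are $\bm{z}^{T}\operatorname{diag}(\mathbf{P},\mathbf{0})\bm{z}=1$ and $\bm{z}^{T}\operatorname{diag}(\mathbf{0},\mathbf{I})\bm{z}\le 1$. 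Invoking Lemma~\ref{lem:s_procedure}, together with the Remark that permits a sign-free multiplier for the equality constraint, I would choose the multipliers $\tau_{1}=-\lambda$ (for $\bm{x}^{T}\mathbf{P}\bm{x}=1$) and $\tau_{2}=\lambda\ge 0$ (for $\bm{d}^{T}\bm{d}\le 1$); the scalar condition $\alpha_{0}\ge\tau_{1}\alpha_{1}+\tau_{2}\alpha_{2}$ then reads $0\ge -\lambda+\lambda$ and holds with equality, so the sufficient condition collapses to $\mathbf{M}_{0}\preceq -\lambda\operatorname{diag}(\mathbf{P},\mathbf{0})+\lambda\operatorname{diag}(\mathbf{0},\mathbf{I})$, i.e.
\[
\begin{bmatrix}\mathbf{P}\mathbf{A}_{\mathrm{cl}}+\mathbf{A}_{\mathrm{cl}}^{T}\mathbf{P}+\lambda\mathbf{P} & \mathbf{P}\mathbf{E}\\ \mathbf{E}^{T}\mathbf{P} & -\lambda\mathbf{I}\end{bmatrix}\preceq 0 .
\]

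Finally I would congruence-transform this inequality by the nonsingular symmetric matrix $\operatorname{diag}(\mathbf{Q},\mathbf{I})$ with $\mathbf{Q}=\mathbf{P}^{-1}$ — which preserves the sign of the semidefinite inequality — using $\mathbf{Q}\mathbf{P}=\mathbf{I}$ and the substitution $\mathbf{Y}=\mathbf{K}\mathbf{Q}$ (so $\mathbf{B}\mathbf{K}\mathbf{Q}=\mathbf{B}\mathbf{Y}$ and $\mathbf{Q}\mathbf{K}^{T}\mathbf{B}^{T}=\mathbf{Y}^{T}\mathbf{B}^{T}$). The $(1,1)$ block becomes $\mathbf{A}\mathbf{Q}+\mathbf{Q}\mathbf{A}^{T}+\mathbf{B}\mathbf{Y}+\mathbf{Y}^{T}\mathbf{B}^{T}+\lambda\mathbf{Q}$, the off-diagonal block becomes $\mathbf{Q}\mathbf{P}\mathbf{E}=\mathbf{E}$, and the $(2,2)$ block is unchanged, yielding exactly \eqref{eq:main_lmi}; the gain is then recovered as $\mathbf{K}=\mathbf{Y}\mathbf{Q}^{-1}$. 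Reading the chain backward — LMI \eqref{eq:main_lmi} $\Leftrightarrow$ the $\mathbf{P}$-version inequality $\Rightarrow$ (S-procedure) the pointwise inequality $\Rightarrow$ (Nagumo) RCI — establishes the claim.

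I expect the one genuinely delicate point to be the sign bookkeeping in the S-procedure step: verifying that it is the equality constraint $\bm{x}^{T}\mathbf{P}\bm{x}=1$ that legitimizes the negative multiplier $\tau_{1}=-\lambda$, which is precisely what introduces the stabilizing $+\lambda\mathbf{P}$ (hence $+\lambda\mathbf{Q}$) term, and that the accompanying scalar inequality of Lemma~\ref{lem:s_procedure} is satisfied. The vectorization and the congruence transformation are routine, and strictness $\lambda>0$ is simply carried along from $\tau_{2}=\lambda$.
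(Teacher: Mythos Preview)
Your proposal is correct and follows essentially the same approach as the paper: vectorize Nagumo's condition in $\bm{z}=[\bm{x}^{T}\ \bm{d}^{T}]^{T}$, apply the S-procedure with the identical multiplier choice $\tau_{1}=-\lambda$, $\tau_{2}=\lambda$, and then congruence-transform by $\operatorname{diag}(\mathbf{Q},\mathbf{I})$ with $\mathbf{Y}=\mathbf{K}\mathbf{Q}$. You are in fact slightly more careful than the paper in explicitly verifying the scalar side-condition $\alpha_{0}\ge\tau_{1}\alpha_{1}+\tau_{2}\alpha_{2}$ of Lemma~\ref{lem:s_procedure}.
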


\begin{proof}
    Substituting $\bm{u} = \mathbf{K}\bm{x}$ into Corollary~\ref{cor:nagumo_linear_ellipsoid} and writing the invariance condition as a quadratic form in the augmented vector $\bm{z} = [\bm{x}^T, \bm{d}^T]^T$, the requirement is
    \begin{equation}
        \bm{z}^T \begin{bmatrix}
            \mathbf{P}(\mathbf{A}+\mathbf{B}\mathbf{K})+(\mathbf{A}+\mathbf{B}\mathbf{K})^T\mathbf{P} & \mathbf{P}\mathbf{E} \\
            \mathbf{E}^T\mathbf{P} & \mathbf{0}
        \end{bmatrix} \bm{z} \leq 0
    \end{equation}
    for all $\bm{z}$ satisfying $\bm{x}^T\mathbf{P}\bm{x} = 1$ and $\bm{d}^T\bm{d} \leq 1$. Applying the S-procedure (Lemma~\ref{lem:s_procedure}) with an unrestricted multiplier $\tau_1$ for the equality constraint and $\tau_2 \geq 0$ for the disturbance ball, and choosing $\tau_1 = -\lambda$, $\tau_2 = \lambda$ for some $\lambda > 0$, yields
    \begin{equation}
        \begin{bmatrix}
            \mathbf{P}(\mathbf{A}+\mathbf{B}\mathbf{K})+(\mathbf{A}+\mathbf{B}\mathbf{K})^T\mathbf{P} + \lambda\mathbf{P} & \mathbf{P}\mathbf{E} \\
            \mathbf{E}^T\mathbf{P} & -\lambda\mathbf{I}
        \end{bmatrix} \preceq 0.
    \end{equation}
    Applying the congruence transformation $\mathrm{diag}(\mathbf{Q}, \mathbf{I})$ with $\mathbf{Q} = \mathbf{P}^{-1}$ and substituting $\mathbf{Y} = \mathbf{K}\mathbf{Q}$ gives the LMI~\eqref{eq:main_lmi}.
\end{proof}

\subsection{Constraint Satisfaction}

For the linear system \eqref{eq:linear_system}, we consider safety constraints on the system outputs of the form:
\begin{equation}
 \mathbf{C}\bm{x} \leq \bm{y}_{\max}
\end{equation}
where $\mathbf{C} \in \mathbb{R}^{q \times n}$ is the output matrix and $\bm{y}_{\max} \in \mathbb{R}^{q}$ defines the safe operating region.
Additionally, the control inputs are subject to box constraints:
\begin{equation}
    |u_i| \leq u_{i,\max}, \quad i = 1, \ldots, m
\end{equation}
where $u_{i,\max}$ is the maximum allowable magnitude for the $i$-th control input. 
We now show how these constraints can be incorporated into the LMI framework to ensure that trajectories starting within the ellipsoidal RCI set remain safe and feasible.

\begin{proposition}
    \label{prop:input_constraint}
    (Input Constraint Satisfaction)
    The state feedback control law $\bm{u} = \mathbf{K}\bm{x}$ satisfies individual input constraints $|u_i| \leq u_{i,\max}$ for all $\bm{x} \in \mathcal{E}(\mathbf{P})$ if
    \begin{equation}
        \begin{bmatrix}
            \mathbf{Q} & \mathbf{y}_i^T \\
            \mathbf{y}_i & u_{i,\max}^2
        \end{bmatrix} \succeq 0, \quad i = 1, \ldots, m
    \end{equation}
    where $\mathbf{y}_i^T$ is the $i$-th row of $\mathbf{Y}$.
\end{proposition}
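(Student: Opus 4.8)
The plan is to reduce the semi-infinite input bound to a single scalar inequality by maximizing the squared input over the ellipsoid, and then to convert that scalar inequality into the stated LMI via a Schur complement. Write $\bm{k}_i^T$ for the $i$-th row of $\mathbf{K}$, so that $u_i = \bm{k}_i^T\bm{x}$; the requirement that $|u_i| \leq u_{i,\max}$ for all $\bm{x}\in\mathcal{E}(\mathbf{P})$ is then equivalent to $\max\{(\bm{k}_i^T\bm{x})^2 : \bm{x}^T\mathbf{P}\bm{x}\leq 1\} \leq u_{i,\max}^2$.

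First I would evaluate this maximum. Since $\mathbf{P}\succ 0$, the symmetric square root $\mathbf{P}^{1/2}$ exists; the substitution $\bm{z}=\mathbf{P}^{1/2}\bm{x}$ turns the feasible set into the unit ball $\|\bm{z}\|\leq 1$, and Cauchy--Schwarz gives $|\bm{k}_i^T\bm{x}| = |(\mathbf{P}^{-1/2}\bm{k}_i)^T\bm{z}| \leq \|\mathbf{P}^{-1/2}\bm{k}_i\|$, with equality at $\bm{z}=\mathbf{P}^{-1/2}\bm{k}_i/\|\mathbf{P}^{-1/2}\bm{k}_i\|$. Hence the maximum equals $\bm{k}_i^T\mathbf{P}^{-1}\bm{k}_i = \bm{k}_i^T\mathbf{Q}\bm{k}_i$, so the input constraint is equivalent to the scalar inequality $\bm{k}_i^T\mathbf{Q}\bm{k}_i \leq u_{i,\max}^2$. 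Next, using $\mathbf{Y}=\mathbf{K}\mathbf{Q}$ and the symmetry of $\mathbf{Q}$, the $i$-th row of $\mathbf{Y}$ is $\bm{k}_i^T\mathbf{Q}$, i.e.\ $\mathbf{y}_i=\mathbf{Q}\bm{k}_i$, whence $\bm{k}_i=\mathbf{Q}^{-1}\mathbf{y}_i$ and $\bm{k}_i^T\mathbf{Q}\bm{k}_i = \mathbf{y}_i^T\mathbf{Q}^{-1}\mathbf{y}_i$. The constraint becomes $u_{i,\max}^2 - \mathbf{y}_i^T\mathbf{Q}^{-1}\mathbf{y}_i \geq 0$, and because $\mathbf{Q}\succ 0$ the Schur complement lemma rewrites this equivalently as the $(n+1)\times(n+1)$ block LMI in the statement (the placement of $\mathbf{y}_i$ versus $\mathbf{y}_i^T$ in the off-diagonal blocks being fixed by the symmetry of that matrix). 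This proves the claim; in fact the chain of equivalences shows the LMI is not merely sufficient but equivalent to input-constraint satisfaction over $\mathcal{E}(\mathbf{P})$.

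There is no substantive obstacle here: every step is a standard manipulation. The only points demanding care are (i) justifying the closed form of the ellipsoidal maximum, which relies on $\mathbf{P}\succ 0$ so that $\mathbf{P}^{\pm 1/2}$ are well defined together with the Cauchy--Schwarz bound above, and (ii) the transpose bookkeeping when passing between the row $\bm{k}_i^T$ of $\mathbf{K}$, the row $\mathbf{y}_i^T$ of $\mathbf{Y}$, and the column vectors appearing in the Schur complement.
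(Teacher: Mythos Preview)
Your proof is correct. Both you and the paper end at the same Schur-complement step, but the route differs: the paper invokes the S-procedure (Lemma~\ref{lem:s_procedure}) with $f_0(\bm{x})=(\bm{k}_i^T\bm{x})^2$ and $f_1(\bm{x})=\bm{x}^T\mathbf{P}\bm{x}$, obtains $\bm{k}_i\bm{k}_i^T\preceq\tau_i\mathbf{P}$ with $\tau_i\le u_{i,\max}^2$, congruence-transforms by $\mathbf{Q}$, and then fixes $\tau_i=u_{i,\max}^2$ before applying the Schur complement. You instead compute the exact ellipsoidal maximum $\max_{\bm{x}^T\mathbf{P}\bm{x}\le 1}(\bm{k}_i^T\bm{x})^2=\bm{k}_i^T\mathbf{Q}\bm{k}_i$ directly via Cauchy--Schwarz, bypassing the multiplier altogether. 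This buys you the equivalence (not merely sufficiency) for free, which you rightly note; the paper's route has the advantage of reusing a tool it has already set up and applying it uniformly across several results, while yours is shorter and self-contained for this particular proposition.
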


\begin{proof}
    For $\bm{x} \in \mathcal{E}(\mathbf{P})$, we have $\bm{x}^T\mathbf{P}\bm{x} \leq 1$. The constraint $|u_i| = |\mathbf{k}_i^T\bm{x}| \leq u_{i,\max}$ is equivalent to $(\mathbf{k}_i^T\bm{x})^2 \leq u_{i,\max}^2$, where $\mathbf{k}_i^T$ is the $i$-th row of $\mathbf{K}$.
    We formulate this using quadratic forms with $f_0(\bm{x}) = (\mathbf{k}_i^T\bm{x})^2 = \bm{x}^T(\mathbf{k}_i\mathbf{k}_i^T)\bm{x}$ and $f_1(\bm{x}) = \bm{x}^T\mathbf{P}\bm{x}$.
    Applying the S-procedure (Lemma \ref{lem:s_procedure}) with $\alpha_0 = u_{i,\max}^2$ and $\alpha_1 = 1$, there exists a multiplier $\tau_i \geq 0$ such that
    \begin{equation}
        \mathbf{k}_i\mathbf{k}_i^T \preceq \tau_i \mathbf{P} \quad \text{and} \quad \tau_i \leq u_{i,\max}^2
    \end{equation}
    Substituting $\mathbf{P} = \mathbf{Q}^{-1}$ and pre- and post-multiplying by $\mathbf{Q}$ gives
    \begin{equation}
        \mathbf{Q}\mathbf{k}_i\mathbf{k}_i^T\mathbf{Q} \preceq \tau_i \mathbf{Q}
    \end{equation}
    From the definition $\mathbf{Y} = \mathbf{K}\mathbf{Q}$, we have $\mathbf{k}_i^T = \mathbf{y}_i^T\mathbf{Q}^{-1}$, which gives $\mathbf{k}_i = \mathbf{Q}^{-1}\mathbf{y}_i$. Substituting this:
    \begin{equation}
        \mathbf{Q}(\mathbf{Q}^{-1}\mathbf{y}_i)(\mathbf{Q}^{-1}\mathbf{y}_i)^T\mathbf{Q} = \mathbf{y}_i\mathbf{y}_i^T \preceq \tau_i \mathbf{Q}
    \end{equation}
    Dividing by $\tau_i$ and rearranging gives
    \begin{equation}
        \frac{1}{\tau_i} \mathbf{y}_i\mathbf{y}_i^T \preceq \mathbf{Q}
    \end{equation}
    By the Schur complement lemma, and setting $\tau_i = u_{i,\max}^2$, this is equivalent to
    \begin{equation}
        \begin{bmatrix}
            \mathbf{Q} & \mathbf{y}_i^T \\
            \mathbf{y}_i & u_{i,\max}^2
        \end{bmatrix} \succeq 0
    \end{equation}
    This completes the proof.
\end{proof}

\begin{proposition}
    \label{prop:output_constraint}
    (Output Constraint Satisfaction)
    For the system output $\bm{y} = \mathbf{C}\bm{x}$, individual output constraints $|y_j| \leq y_{j,\max}$ are satisfied for all $\bm{x} \in \mathcal{E}(\mathbf{P})$ if
    \begin{equation}
        \begin{bmatrix}
            \mathbf{Q} & \mathbf{Q}\mathbf{c}_j \\
            \mathbf{c}_j^T\mathbf{Q} & y_{j,\max}^2
        \end{bmatrix} \succeq 0, \quad j = 1, \ldots, q
    \end{equation}
    where $\mathbf{c}_j^T$ is the $j$-th row of $\mathbf{C}$.
\end{proposition}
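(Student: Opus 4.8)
The plan is to follow, almost verbatim, the proof of Proposition~\ref{prop:input_constraint}, with the fixed output row $\mathbf{c}_j^T$ playing the role that the feedback-gain row $\mathbf{k}_i^T$ played there. First I would fix an arbitrary $\bm{x}\in\mathcal{E}(\mathbf{P})$, so that $\bm{x}^T\mathbf{P}\bm{x}\le 1$, and rewrite the target constraint: $|y_j|=|\mathbf{c}_j^T\bm{x}|\le y_{j,\max}$ holds iff $(\mathbf{c}_j^T\bm{x})^2=\bm{x}^T\!\left(\mathbf{c}_j\mathbf{c}_j^T\right)\!\bm{x}\le y_{j,\max}^2$. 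Hence it suffices to establish, for each $j=1,\dots,q$, the quadratic implication ``$\bm{x}^T\mathbf{P}\bm{x}\le 1 \ \Rightarrow\ \bm{x}^T(\mathbf{c}_j\mathbf{c}_j^T)\bm{x}\le y_{j,\max}^2$''.

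Second, I would discharge this implication with the S-procedure (Lemma~\ref{lem:s_procedure}): taking $f_0(\bm{x})=\bm{x}^T(\mathbf{c}_j\mathbf{c}_j^T)\bm{x}$ with $\alpha_0=y_{j,\max}^2$ and the single constraint $f_1(\bm{x})=\bm{x}^T\mathbf{P}\bm{x}$ with $\alpha_1=1$, a sufficient condition is the existence of a scalar $\tau_j\ge 0$ with $\mathbf{c}_j\mathbf{c}_j^T\preceq\tau_j\mathbf{P}$ and $\tau_j\le y_{j,\max}^2$. Since the matrix inequality is monotone in $\tau_j$, I would fix the extreme choice $\tau_j=y_{j,\max}^2$, collapsing both conditions to the single one $\mathbf{c}_j\mathbf{c}_j^T\preceq y_{j,\max}^2\,\mathbf{P}$.

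Third, I would re-express this in the synthesis variable $\mathbf{Q}=\mathbf{P}^{-1}$. A congruence transformation by $\mathbf{Q}$ followed by the Schur complement lemma turns $\mathbf{c}_j\mathbf{c}_j^T\preceq y_{j,\max}^2\mathbf{P}$ into the block LMI stated in the proposition; equivalently, one may note directly that $\max_{\bm{x}\in\mathcal{E}(\mathbf{P})}(\mathbf{c}_j^T\bm{x})^2=\mathbf{c}_j^T\mathbf{Q}\mathbf{c}_j$ and then Schur-complement the scalar bound $\mathbf{c}_j^T\mathbf{Q}\mathbf{c}_j\le y_{j,\max}^2$. Unlike the input case, $\mathbf{C}$ is problem data rather than a decision variable, so no auxiliary substitution analogous to $\mathbf{Y}=\mathbf{K}\mathbf{Q}$ is needed; the resulting condition is already affine in $\mathbf{Q}$ and can be appended to the LMI of Theorem~\ref{thm:lmi_rfcis} while preserving convexity.

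The only real obstacle is the bookkeeping in this last step — tracking which blocks carry $\mathbf{Q}$ versus $\mathbf{Q}^{-1}$, checking the conformability of the column $\mathbf{c}_j$ in the augmented matrix, and confirming that pinning $\tau_j$ to its largest feasible value $y_{j,\max}^2$ is without loss of generality. Every remaining detail is a direct transcription of the argument already carried out for the input constraints in Proposition~\ref{prop:input_constraint}.
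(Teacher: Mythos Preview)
Your plan is exactly what the paper does: its proof simply says to repeat the argument of Proposition~\ref{prop:input_constraint} with the quadratic form $f_0(\bm{x})=(\mathbf{c}_j^T\bm{x})^2$ and apply the S-procedure. Through the choice $\tau_j=y_{j,\max}^2$ your derivation is correct and lands on the tight condition $\mathbf{c}_j\mathbf{c}_j^T\preceq y_{j,\max}^2\,\mathbf{P}$, equivalently $\mathbf{c}_j^T\mathbf{Q}\,\mathbf{c}_j\le y_{j,\max}^2$.

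The step you dismiss as ``bookkeeping,'' however, does not actually produce the displayed LMI. Congruence by $\mathbf{Q}$ turns $\mathbf{c}_j\mathbf{c}_j^T\preceq y_{j,\max}^2\mathbf{P}$ into $(\mathbf{Q}\mathbf{c}_j)(\mathbf{Q}\mathbf{c}_j)^T\preceq y_{j,\max}^2\mathbf{Q}$, whose Schur complement is
\[
\begin{bmatrix}\mathbf{Q} & \mathbf{Q}\mathbf{c}_j\\ \mathbf{c}_j^T\mathbf{Q} & y_{j,\max}^2\end{bmatrix}\succeq 0,
\]
with $\mathbf{Q}\mathbf{c}_j$, not the bare $\mathbf{c}_j$, in the off-diagonal. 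In Proposition~\ref{prop:input_constraint} this extra factor of $\mathbf{Q}$ was absorbed by the substitution $\mathbf{y}_i=\mathbf{Q}\mathbf{k}_i$; here, as you yourself note, $\mathbf{C}$ is fixed data and no analogous substitution is available to hide it. Conversely, taking the Schur complement of the LMI \emph{as stated} in the proposition yields $\mathbf{c}_j^T\mathbf{Q}^{-1}\mathbf{c}_j=\mathbf{c}_j^T\mathbf{P}\,\mathbf{c}_j\le y_{j,\max}^2$, which is neither the bound you derived nor, in general, sufficient for $|y_j|\le y_{j,\max}$ on $\mathcal{E}(\mathbf{P})$. So the obstacle you anticipate is real rather than cosmetic: either the off-diagonal entry in the stated LMI should read $\mathbf{Q}\mathbf{c}_j$, or the condition as written requires a separate argument that a straight transcription of Proposition~\ref{prop:input_constraint} does not supply.
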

\begin{proof}
    For $\bm{x} \in \mathcal{E}(\mathbf{P})$, we have $\bm{x}^T\mathbf{P}\bm{x} \leq 1$. The maximum of $|y_j| = |\mathbf{c}_j^T\bm{x}|$ over $\mathcal{E}(\mathbf{P})$ equals $\sqrt{\mathbf{c}_j^T\mathbf{P}^{-1}\mathbf{c}_j} = \sqrt{\mathbf{c}_j^T\mathbf{Q}\mathbf{c}_j}$, so the constraint requires $\mathbf{c}_j^T\mathbf{Q}\mathbf{c}_j \leq y_{j,\max}^2$. By the Schur complement lemma applied to the $(2,2)$ block, $\begin{bmatrix}\mathbf{Q} & \mathbf{Q}\mathbf{c}_j \\ \mathbf{c}_j^T\mathbf{Q} & y_{j,\max}^2\end{bmatrix} \succeq 0$ is equivalent to $y_{j,\max}^2 - \mathbf{c}_j^T\mathbf{Q}\mathbf{Q}^{-1}\mathbf{Q}\mathbf{c}_j = y_{j,\max}^2 - \mathbf{c}_j^T\mathbf{Q}\mathbf{c}_j \geq 0$.
\end{proof}

\section{Robust Safety Filter Synthesis}
\label{sec:robust_safety_filter_synthesis}
To synthesize a robust safety filter, we combine the invariance conditions with input and output constraints into a unified optimization framework. 
The key insight is to simultaneously compute the largest ellipsoidal RCI set while ensuring all physical and safety constraints are satisfied. Maximizing the volume of the ellipsoidal RCI set is crucial for practical implementation as it expands the safe operating region, reducing the conservatism of the safety filter. A larger invariant set allows the nominal controller to operate freely over a wider state space before intervention becomes necessary.

The volume of an ellipsoid $\mathcal{E}(\mathbf{P})$ is proportional to $(\det \mathbf{P})^{-1/2} = (\det \mathbf{Q})^{1/2}$ where $\mathbf{Q} = \mathbf{P}^{-1}$. Since $\log\det\mathbf{Q}$ is concave in $\mathbf{Q} \succ 0$, maximizing it yields a convex program that is directly supported by modern SDP solvers (e.g., CVX, YALMIP, MOSEK).

\subsection{Unified LMI Optimization Formulation}
By combining the invariance condition from Theorem \ref{thm:lmi_rfcis} with the constraint satisfaction conditions from Propositions \ref{prop:input_constraint} and \ref{prop:output_constraint}, we obtain the following unified formulation:

\begin{problem}
    \label{prob:unified_lmi_optimization}
    (Unified LMI Formulation for Robust Safety Filter Synthesis)
    The robust safety filter synthesis problem for the linear system \eqref{eq:linear_system} with input constraints $|u_i| \leq u_{i,\max}$ and output constraints $|y_j| \leq y_{j,\max}$ can be formulated as the following LMI optimization problem:
    
    \begin{align}
        \max_{\mathbf{Q}, \mathbf{Y}} \quad & \log\det(\mathbf{Q}) \\
        \text{s.t.} \quad & \begin{bmatrix}
            \mathbf{A}\mathbf{Q} + \mathbf{Q}\mathbf{A}^T + \mathbf{B}\mathbf{Y} + \mathbf{Y}^T\mathbf{B}^T + \lambda\mathbf{Q} & \mathbf{E} \\
            \mathbf{E}^T & -\lambda \mathbf{I}
        \end{bmatrix} \preceq 0 \\
        & \begin{bmatrix}
            \mathbf{Q} & \mathbf{y}_i^T \\
            \mathbf{y}_i & u_{i,\max}^2
        \end{bmatrix} \succeq 0, \quad i = 1, \ldots, m \\
        & \begin{bmatrix}
            \mathbf{Q} & \mathbf{Q}\mathbf{c}_j \\
            \mathbf{c}_j^T\mathbf{Q} & y_{j,\max}^2
        \end{bmatrix} \succeq 0, \quad j = 1, \ldots, q \\
        & \mathbf{Q} \succ 0
    \end{align}
    where $\lambda > 0$ is a fixed parameter, and the decision variables are $\mathbf{Q} = \mathbf{P}^{-1}$ and $\mathbf{Y} = \mathbf{K}\mathbf{Q}$. The optimal ellipsoidal RCI set is $\mathcal{E}(\mathbf{P}^*)$ with $\mathbf{P}^* = (\mathbf{Q}^*)^{-1}$, and the associated control law is $\bm{u} = \mathbf{K}^*\bm{x}$ with $\mathbf{K}^* = \mathbf{Y}^*(\mathbf{Q}^*)^{-1}$.
\end{problem}

\begin{remark}
    The optimization problem in Problem \ref{prob:unified_lmi_optimization} is a convex semidefinite program (SDP) that can be efficiently solved using standard numerical solvers such as MOSEK, SeDuMi, or SDPT3. The convexity ensures global optimality of the solution, providing a reliable and computationally tractable method for robust safety filter synthesis.
\end{remark}

\subsection{Filtering Strategy}
Once the RCI set $\mathcal{E}(\mathbf{P}^*)$ and the associated safe control law $\bm{u}_{\text{b}} = \mathbf{K}^*\bm{x}$ are determined, a safety filter can be constructed to minimally intervene with a nominal performance controller $\bm{u}_{\text{nom}}$. The goal is to use $\bm{u}_{\text{nom}}$ when the system is safely inside the RCI set and smoothly switch to $\bm{u}_{\text{b}}$ as the state approaches the boundary.

This is achieved using a mixing function $\alpha(\bm{x})$ that depends on the safety metric $h(\bm{x}) = \bm{x}^T\mathbf{P}^*\bm{x}$. The filtered control input $\bm{u}_s$ is given by:
\begin{equation}
    \label{eq:safety_filter_law}
    \bm{u}_s(\bm{x}, \bm{u}_{\text{nom}},\bm{u}_{\text{b}}) = (1 - \alpha(\bm{x}))\bm{u}_{\text{nom}} + \alpha(\bm{x})\bm{u}_{\text{b}}
\end{equation}
The mixing function $\alpha(\bm{x})$ is designed to be 0 deep inside the set and 1 near the boundary. A common choice is a ramp function:
\begin{equation}
    \label{eq:mixing_function}
    \alpha(\bm{x}) = 
    \begin{cases}
        0 & \text{if } h(\bm{x}) \leq h_{\min} \\
        \frac{h(\bm{x}) - h_{\min}}{h_{\max} - h_{\min}} & \text{if } h_{\min} < h(\bm{x}) < h_{\max} \\
        1 & \text{if } h(\bm{x}) \geq h_{\max}
    \end{cases}
\end{equation}
where $0 < h_{\min} < h_{\max} \leq 1$ are tunable parameters defining a transition region.

When the state $\bm{x}$ is on the boundary of the RCI set, i.e., $\bm{x}^T\mathbf{P}^*\bm{x} = 1$, we have $h(\bm{x}) = 1$. Since $h_{\max} \leq 1$, this implies $\alpha(\bm{x}) = 1$. Consequently, the safety filter outputs $\bm{u}_s = \bm{u}_{\text{b}} = \mathbf{K}^*\bm{x}$. As established by Theorem \ref{thm:lmi_rfcis}, this control law ensures that the system's velocity vector points inwards or is tangent to the ellipsoid, satisfying the Nagumo condition and thus guaranteeing the forward invariance of the set $\mathcal{E}(\mathbf{P}^*)$.

\begin{remark}
    The mixing function $\alpha(\bm{x})$ can be designed to be smooth (e.g., using a sigmoid function) to ensure continuity and differentiability of the filtered control law $\bm{u}_s$. 
\end{remark}

\begin{remark}
    If both $\bm{u}_{\text{nom}}$ and $\bm{u}_{\text{b}}$ satisfy input constraints, then their convex combination $\bm{u}_s$ also satisfies the constraints.
\end{remark}

\begin{remark}
    \label{rem:nonlinear_extension}
    (Linearization-Based Extension)
    For nonlinear systems $\dot{\bm{x}} = f(\bm{x}, \bm{u}) + \mathbf{E}_d\bm{d}$ that admit a decomposition $f(\bm{x},\bm{u}) = \mathbf{A}\bm{x} + \mathbf{B}\bm{u} + \mathbf{B}_\phi\bm{\phi}(\bm{x})$ around a chosen equilibrium, with the nonlinearity bounded by a state-independent envelope $\|\bm{\phi}(\bm{x})\| \leq \Delta_{\max}$, the LMI framework above extends directly. Treating $\bm{d}$ and $\bm{\phi}$ as two independent norm-bounded uncertainties and applying the S-procedure \cite{boyd_linear_1994} with multipliers $\alpha > 0$ and $\mu > 0$, the invariance condition becomes the LMI
    \begin{equation}
    \label{eq:linearized_lmi}
    \begin{bmatrix}
    \Phi_{11} & \mathbf{B}_\phi & \mathbf{E}_d \\
    \mathbf{B}_\phi^T & -(\mu/\Delta_{\max}^2)\mathbf{I} & \mathbf{0} \\
    \mathbf{E}_d^T & \mathbf{0} & -\alpha \mathbf{I}
    \end{bmatrix} \preceq 0,
    \end{equation}
    where $\Phi_{11} = \mathbf{A}\mathbf{Q} + \mathbf{Q}\mathbf{A}^T + \mathbf{B}\mathbf{Y} + \mathbf{Y}^T\mathbf{B}^T + (\alpha + \mu)\mathbf{Q}$. This is the faithful two-multiplier embedding of a norm-bounded $\bm{\phi}$; it is strictly less conservative than a single-multiplier ``augmented disturbance'' formulation that lumps $\bm{d}$ and $\bm{\phi}$ at a common scale. While computationally tractable, the resulting feasibility region shrinks rapidly with the rotation envelope, since $\Delta_{\max}$ must be evaluated at the worst-case state in the constraint set. A rigorous, approximation-free extension using state-coupled sector-bounded nonlinearities is developed in Section~\ref{sec:quaternion_safety_filter}, and the two formulations are compared in Section~\ref{sec:conservatism_analysis}.
\end{remark}

\section{Extension to Nonlinear Systems: Quaternion Formulation}
\label{sec:quaternion_safety_filter}

This section presents a rigorous derivation of safety filters for attitude control using unit quaternion representation. Unlike Euler angles, quaternions avoid singularities and provide a global parameterization of $SO(3)$.

\subsection{Exact Quaternion Attitude Dynamics}

\subsubsection{Quaternion Kinematics}

A unit quaternion $\bm{q} = [q_0, \bm{q}_v^T]^T \in \mathbb{R}^4$ with scalar part $q_0 \in \mathbb{R}$ and vector part $\bm{q}_v = [q_1, q_2, q_3]^T \in \mathbb{R}^3$ satisfies the unit norm constraint:
\begin{equation}
\label{eq:quat_norm}
q_0^2 + \|\bm{q}_v\|^2 = 1
\end{equation}

The quaternion kinematics governing attitude evolution are:
\begin{equation}
\label{eq:quat_kinematics}
\dot{\bm{q}}_v = \frac{1}{2}(q_0 \mathbf{I}_3 + [\bm{q}_v]_\times)\bm{\omega}
\end{equation}
where $\bm{\omega} \in \mathbb{R}^3$ is the angular velocity in the body frame, $\mathbf{I}_3$ is the $3 \times 3$ identity matrix, and $[\bm{q}_v]_\times \in \mathbb{R}^{3 \times 3}$ is the skew-symmetric matrix:
\begin{equation}
[\bm{q}_v]_\times = \begin{bmatrix}
0 & -q_3 & q_2 \\
q_3 & 0 & -q_1 \\
-q_2 & q_1 & 0
\end{bmatrix}
\end{equation}

\subsubsection{Angular Velocity Dynamics}

The angular velocity dynamics for a rigid body are:
\begin{equation}
\label{eq:omega_dynamics}
\dot{\bm{\omega}} = \mathbf{J}^{-1}(-\bm{\omega} \times \mathbf{J}\bm{\omega} + \bm{u} + \bm{d})
\end{equation}
where $\mathbf{J} \in \mathbb{R}^{3 \times 3}$ is the positive definite inertia matrix, $\bm{u} \in \mathbb{R}^3$ is the control torque, and $\bm{d} \in \mathbb{R}^3$ is the external disturbance torque with $\|\bm{d}\| \leq d_{\max}$.

\subsection{State-Space Formulation and Linearization}

Define the reduced state vector using only the quaternion vector part:
\begin{equation}
\label{eq:state_vector}
\bm{z} = \begin{bmatrix} \bm{q}_v \\ \bm{\omega} \end{bmatrix} \in \mathbb{R}^6
\end{equation}

The scalar part $q_0$ is recovered from the unit norm constraint \eqref{eq:quat_norm} as $q_0 = \pm\sqrt{1 - \|\bm{q}_v\|^2}$. For the shortest rotation path, we choose $q_0 = +\sqrt{1 - \|\bm{q}_v\|^2}$.

At the hover equilibrium $(q_0, \bm{q}_v, \bm{\omega}) = (1, \mathbf{0}, \mathbf{0})$, the linearized system is:
\begin{equation}
\label{eq:linearized_system}
\dot{\bm{z}} = \mathbf{A}\bm{z} + \mathbf{B}\bm{u} + \bar{\mathbf{E}}\bar{\bm{d}}
\end{equation}
where:
\begin{equation}
\mathbf{A} = \begin{bmatrix}
\mathbf{0}_{3\times3} & \frac{1}{2}\mathbf{I}_3 \\
\mathbf{0}_{3\times3} & \mathbf{0}_{3\times3}
\end{bmatrix}, \quad
\mathbf{B} = \begin{bmatrix}
\mathbf{0}_{3\times3} \\
\mathbf{J}^{-1}
\end{bmatrix}, \quad
\bar{\mathbf{E}} = \begin{bmatrix}
\mathbf{0}_{3\times3} \\
d_{\max} \mathbf{J}^{-1}
\end{bmatrix}
\end{equation}
with normalized disturbance $\|\bar{\bm{d}}\| \leq 1$.

\subsection{Nonlinear System with Sector-Bounded Nonlinearity}

The exact nonlinear system can be written as:
\begin{equation}
\label{eq:nonlinear_system}
\dot{\bm{z}} = \mathbf{A}\bm{z} + \mathbf{B}\bm{u} + \mathbf{B}_\phi \bm{\phi}(\bm{z}) + \bar{\mathbf{E}}\bar{\bm{d}}
\end{equation}
where $\mathbf{B}_\phi = [\mathbf{I}_3^T,\, \mathbf{0}^T]^T$ and the nonlinearity $\bm{\phi}: \mathbb{R}^6 \to \mathbb{R}^3$ captures the deviation from linear kinematics:
\begin{equation}
\label{eq:nonlinearity}
\bm{\phi}(\bm{z}) = \bm{\phi}(\bm{q}_v, \bm{\omega}) = \frac{1}{2}[(q_0 - 1)\mathbf{I}_3 + [\bm{q}_v]_\times]\bm{\omega}
\end{equation}
with $q_0 = \sqrt{1 - \|\bm{q}_v\|^2}$.

\begin{remark}[Coriolis Term]
\label{rem:coriolis}
The model \eqref{eq:nonlinear_system} omits the Coriolis term $-\mathbf{J}^{-1}(\bm{\omega} \times \mathbf{J}\bm{\omega})$ present in \eqref{eq:omega_dynamics}. Near the hover equilibrium, this term is $\mathcal{O}(\|\bm{\omega}\|^2)$ and negligible for small angular rates. For rigorous safety guarantees, it can be cancelled exactly by augmenting the backup control input with a feedforward term:
\begin{equation}
\bm{u}_{\text{applied}} = \bm{u}_{\text{LMI}} + \bm{\omega} \times \mathbf{J}\bm{\omega}
\end{equation}
Since $\bm{\omega}$ is directly measured, this requires no additional sensors and incurs negligible computational cost. After this cancellation, \eqref{eq:nonlinear_system} describes the closed-loop dynamics exactly.
\end{remark}

\subsection{Rigorous Sector Bound Analysis}

To formulate tractable LMI conditions, we establish a rigorous sector bound relating $\phi(z)$ to the state $z$.

\begin{lemma}[Exact Sector Bound for Quaternion Nonlinearity]
\label{lem:sector_bound}
For all $\bm{q}_v \in \mathbb{R}^3$ satisfying $\|\bm{q}_v\| \leq \bar{q}_v$ and all $\bm{\omega} \in \mathbb{R}^3$, the nonlinearity $\bm{\phi}(\bm{q}_v, \bm{\omega})$ defined in \eqref{eq:nonlinearity} satisfies:
\begin{equation}
\label{eq:sector_bound}
\|\bm{\phi}(\bm{q}_v, \bm{\omega})\|^2 \leq \gamma^2 \|\bm{\omega}\|^2
\end{equation}
where:
\begin{equation}
\label{eq:gamma_exact}
\gamma^2 = \frac{1}{4}\left[(1 - \sqrt{1 - \bar{q}_v^2})^2 + \bar{q}_v^2\right] = \frac{1}{2}\left(1 - \sqrt{1 - \bar{q}_v^2}\right)
\end{equation}
\end{lemma}

\begin{proof}
From~\eqref{eq:nonlinearity}, $\bm{\phi} = \tfrac{1}{2}[(q_0-1)\bm{\omega} + \bm{q}_v\times\bm{\omega}]$.
Since $(q_0-1)\bm{\omega}$ is parallel to $\bm{\omega}$ and $\bm{q}_v\times\bm{\omega}$ is orthogonal to $\bm{\omega}$, the two terms are mutually orthogonal. By the Pythagorean theorem and the identity $\|\bm{q}_v\times\bm{\omega}\|^2 = \|\bm{q}_v\|^2\|\bm{\omega}\|^2 - (\bm{q}_v^T\bm{\omega})^2$:
\begin{equation}
    \|\bm{\phi}\|^2 \leq \tfrac{1}{4}\bigl[(q_0-1)^2 + \|\bm{q}_v\|^2\bigr]\|\bm{\omega}\|^2.
\end{equation}
Define $f(r) = (\sqrt{1-r^2}-1)^2 + r^2$ with $r = \|\bm{q}_v\|$. Since $f'(r) = 2r/\sqrt{1-r^2} > 0$ for $r > 0$, the maximum over $r \in [0,\bar{q}_v]$ is attained at $r = \bar{q}_v$. Expanding $f(\bar{q}_v)$ and simplifying gives $\gamma^2 = \tfrac{1}{4}f(\bar{q}_v) = \tfrac{1}{2}(1 - \sqrt{1-\bar{q}_v^2})$.
\end{proof}

\begin{corollary}
\label{cor:sector_bound_state}
The sector bound \eqref{eq:sector_bound} can be expressed in terms of the state vector $\bm{z}$ as:
\begin{equation}
\|\bm{\phi}(\bm{z})\|^2 \leq \gamma^2 \|\mathbf{E}_{\text{sec}} \bm{z}\|^2
\end{equation}
where $\mathbf{E}_{\text{sec}} = [\mathbf{0}_{3\times3},\ \mathbf{I}_3] \in \mathbb{R}^{3 \times 6}$ extracts the angular velocity from the state vector.
\end{corollary}

\subsection{LMI Formulation via S-Procedure}

Applying Corollary~\ref{cor:nagumo_ellipsoid} to system~\eqref{eq:nonlinear_system} with state feedback $\bm{u} = \mathbf{K}\bm{z}$, the set $\mathcal{E}(\mathbf{P})$ is RCI if and only if
\begin{equation}
\label{eq:nagumo_condition}
\bm{z}^T \mathbf{P} [\mathbf{A}\bm{z} + \mathbf{B}\mathbf{K}\bm{z} + \mathbf{B}_\phi\bm{\phi}(\bm{z}) + \bar{\mathbf{E}}\bar{\bm{d}}] \leq 0
\end{equation}
holds for all $\bm{z} \in \partial\mathcal{E}(\mathbf{P})$, all $\bm{\phi}$ satisfying $\|\bm{\phi}\|^2 \leq \gamma^2\|\mathbf{E}_{\text{sec}}\bm{z}\|^2$, and all $\bar{\bm{d}}$ satisfying $\|\bar{\bm{d}}\|^2 \leq 1$. To convert this semi-infinite constraint into a finite-dimensional LMI, we apply the S-procedure with multipliers for the boundary constraint ($\bm{z}^T\mathbf{P}\bm{z} = 1$) and the inequality constraints on $\bm{\phi}$ and $\bar{\bm{d}}$.

\begin{theorem}[LMI Formulation for Quaternion Safety Filter]
\label{thm:lmi_quaternion}
For fixed scalars $\alpha \in \mathbb{R}$ and $\tau > 0$, the ellipsoidal set $\mathcal{E}(\mathbf{P})$ is RCI for the quaternion-based system \eqref{eq:nonlinear_system} with state feedback control law $\bm{u} = \mathbf{K}\bm{z}$ if there exist matrices $\mathbf{Q} = \mathbf{P}^{-1} \succ 0$ and $\mathbf{Y}$ satisfying the following LMI in $(\mathbf{Q}, \mathbf{Y})$:
\begin{equation}
\label{eq:lmi_quaternion}
\begin{bmatrix}
\Psi_{11} & \mathbf{B}_\phi & \bar{\mathbf{E}} & \gamma\sqrt{\tau}\mathbf{Q}\mathbf{E}_{\text{sec}}^T \\
\mathbf{B}_\phi^T & -\tau \mathbf{I}_3 & \mathbf{0} & \mathbf{0} \\
\bar{\mathbf{E}}^T & \mathbf{0} & -\alpha \mathbf{I}_3 & \mathbf{0} \\
\gamma\sqrt{\tau}\mathbf{E}_{\text{sec}}\mathbf{Q} & \mathbf{0} & \mathbf{0} & -\mathbf{I}_3
\end{bmatrix} \preceq 0
\end{equation}
where $\Psi_{11} = \mathbf{A}\mathbf{Q} + \mathbf{Q}\mathbf{A}^T + \mathbf{B}\mathbf{Y} + \mathbf{Y}^T\mathbf{B}^T + \alpha\mathbf{Q}$. The feedback gain is recovered as $\mathbf{K} = \mathbf{Y}\mathbf{Q}^{-1}$.
\end{theorem}

\begin{proof}
Following the same S-procedure pattern as Theorem~\ref{thm:lmi_rfcis}, augment the state as $\bm{\xi} = [\bm{z}^T, \bm{\phi}^T, \bar{\bm{d}}^T]^T$ and apply Lemma~\ref{lem:s_procedure} to~\eqref{eq:nagumo_condition} with an unrestricted multiplier $\alpha$ (for the boundary equality $\bm{z}^T\mathbf{P}\bm{z}=1$), $\tau \geq 0$ (for the sector bound), and $\sigma = \alpha$ (for $\|\bar{\bm{d}}\|^2\leq 1$). After the congruence transformation $\mathrm{diag}(\mathbf{Q}, \mathbf{I}_3, \mathbf{I}_3)$ with $\mathbf{Q}=\mathbf{P}^{-1}$ and substitution $\mathbf{Y}=\mathbf{K}\mathbf{Q}$, the $(1,1)$ block contains the nonlinear term $\mathbf{Q}\mathbf{S}\mathbf{Q}$ with $\mathbf{S} = \tau\gamma^2\mathbf{E}_{\text{sec}}^T\mathbf{E}_{\text{sec}}$. Writing $\mathbf{Q}\mathbf{S}\mathbf{Q} = (\gamma\sqrt{\tau}\,\mathbf{Q}\mathbf{E}_{\text{sec}}^T)\mathbf{I}_3^{-1}(\gamma\sqrt{\tau}\,\mathbf{E}_{\text{sec}}\mathbf{Q})$ and applying the Schur complement lemma to linearize this quadratic term yields the $4\times4$ block LMI~\eqref{eq:lmi_quaternion}.
\end{proof}

\begin{remark}[Geometric Interpretation of State Constraint]
The constraint $\|\bm{q}_v\| \leq \bar{q}_v$ on the quaternion vector part corresponds to bounding the total rotation angle $\theta$ from the hover equilibrium via $\bar{q}_v = \sin(\theta_{\max}/2)$.
\end{remark}

\subsection{Complete Optimization Problem}

The quaternion-based robust safety filter synthesis is formulated as:
\begin{problem}[Quaternion Safety Filter Synthesis]
\label{prob:quaternion_synthesis}
For fixed scalars $\alpha \in \mathbb{R}$ and $\tau > 0$:
\begin{equation}
\begin{aligned}
\max_{\mathbf{Q}, \mathbf{Y}} \quad & \log\det(\mathbf{Q}) \\
\text{subject to:} \quad & \mathbf{Q} \succ 0 \\
& \text{LMI } \eqref{eq:lmi_quaternion} \\
& \text{Input constraints (Proposition \ref{prop:input_constraint}, $i=1,2,3$)} \\
& \begin{bmatrix}\mathbf{Q} & \mathbf{Q}\mathbf{C}_k^T \\ \mathbf{C}_k\mathbf{Q} & \bar{y}_k^2\mathbf{I}_3\end{bmatrix} \succeq 0,\; k=1,2
\end{aligned}
\end{equation}
where $\mathbf{C}_1 = [\mathbf{I}_3,\mathbf{0}]$, $\bar{y}_1 = \bar{q}_v$ and $\mathbf{C}_2 = [\mathbf{0},\mathbf{I}_3]$, $\bar{y}_2 = \omega_{\max}$ enforce $\|\bm{q}_v\| \leq \bar{q}_v$ and $\|\bm{\omega}\| \leq \omega_{\max}$ for all $\bm{z}\in\mathcal{E}(\mathbf{P})$.
\end{problem}

Upon solving this optimization problem, the ellipsoidal RCI set is $\mathcal{E}(\mathbf{P}^*)$ with $\mathbf{P}^* = (\mathbf{Q}^*)^{-1}$, and the backup control law is $\bm{u} = \mathbf{K}^*\bm{z}$ with $\mathbf{K}^* = \mathbf{Y}^*(\mathbf{Q}^*)^{-1}$.

\begin{remark}[Numerical Implementation]
For fixed $(\alpha, \tau)$, Problem~\ref{prob:quaternion_synthesis} is a convex SDP in $(\mathbf{Q}, \mathbf{Y})$ solvable by standard solvers (MOSEK, SeDuMi, SDPT3) via CVX or YALMIP. The optimal $(\alpha, \tau)$ are found by a 2D grid search over these parameters.
\end{remark}

\section{Numerical Results}\label{sec:numerical_results}
In this section, we demonstrate the proposed robust safety filter synthesis method on a quadrotor system subject to wind disturbances and input constraints. \footnote{Codes are available at: \url{https://github.com/Faivex/quaternion-safety-filter}}

\subsection{Quadrotor System Model}
Consider a quadrotor system with state vector $\bm{x} = [\bm{p}^T, \bm{v}^T, \bm{\eta}^T, \bm{\omega}^T]^T \in \mathbb{R}^{12}$, where:
\begin{itemize}
    \item $\bm{p} = [x, y, z]^T$: position in inertial frame (z-axis pointing downward)
    \item $\bm{v} = [v_x, v_y, v_z]^T$: linear velocity in inertial frame
    \item $\bm{\eta} = [\phi, \theta, \psi]^T$: roll, pitch, yaw angles
    \item $\bm{\omega} = [p, q, r]^T$: angular rates
\end{itemize}

The nonlinear dynamics can be written as:
\begin{align}
    \dot{\bm{p}} &= \bm{v} \\
    \dot{\bm{v}} &= g\bm{e}_3 - \frac{1}{m}\mathbf{R}(\bm{\eta})(F\bm{e}_3) \\
    \dot{\bm{\eta}} &= \mathbf{W}(\bm{\eta})\bm{\omega} \\
    \dot{\bm{\omega}} &= \mathbf{J}^{-1}(-\bm{\omega} \times \mathbf{J}\bm{\omega} + \bm{\tau} + \bm{d})
\end{align}
where $\mathbf{R}(\bm{\eta})$ is the rotation matrix from body to inertial frame, $\mathbf{W}(\bm{\eta})$ is the Euler angle rate transformation matrix, $F$ is the total thrust, $\bm{\tau}$ are the torque inputs, and $\bm{d}$ is the external disturbance. Numerical parameters and controller gains are listed in Table~\ref{tab:params} (Crazyflie~2.0~\cite{forster2015system}).

\begin{table}[!t]
\centering
\caption{Quadrotor parameters and controller gains.}
\label{tab:params}
\setlength{\tabcolsep}{4pt}
\renewcommand{\arraystretch}{1.1}
\begin{tabular}{llll}
\toprule
Parameter & Value & Parameter & Value \\
\midrule
$m$            & $0.028$\,kg                        & $g$      & $9.81$\,m/s$^2$ \\
$J_{xx}=J_{yy}$& $16.6\times10^{-6}$\,kg$\cdot$m$^2$ & $J_{zz}$ & $29.3\times10^{-6}$\,kg$\cdot$m$^2$ \\
\midrule
$k_{px}$             & $0.2$              & $k_{dx}$              & $0.2$ \\
$k_{py}$             & $-0.2$             & $k_{dy}$              & $-0.2$ \\
$k_{p\phi}=k_{p\theta}$ & $-1\times10^{-3}$ & $k_{d\phi}=k_{d\theta}$ & $2\times10^{-4}$ \\
$k_{p\psi}$          & $-3\times10^{-4}$  & $k_{d\psi}$           & $1\times10^{-4}$ \\
\bottomrule
\end{tabular}
\end{table}

We consider a hierarchical control structure with an outer-loop position controller and an inner-loop attitude controller.
The outer-loop controller computes desired thrust based on z position error and desired roll and pitch based on y and x position errors, respectively.
The inner-loop controller tracks attitude commands using desired torques $\tau_d$ (see Figure \ref{fig:control_structure}).

\begin{figure}[t]
    \centering
    \begin{tikzpicture}[
        auto,
        scale=0.75,
        transform shape,
        block/.style={rectangle, draw, thick, minimum width=1.5cm, minimum height=0.8cm, align=center, font=\small},
        sum/.style={circle, draw, thick, minimum size=0.4cm, inner sep=0pt, font=\small},
        >=latex]
        \node[coordinate] (input) {};
        \node[sum, right=0.5cm of input] (sum1) {$-$};
        
        \node[block, right=0.5cm of sum1, minimum height=1.5cm] (pos_ctrl) {Position\\ Controller};
        \draw[->] (sum1) -- node[font=\small]{$\mathbf{e}_p$} (pos_ctrl);
        \node [draw, thick, minimum width=0.5cm, minimum height=0.6cm, right=1.2cm of pos_ctrl, yshift = -0.3cm] (saturation) 
        {\begin{tikzpicture}[scale=0.25, transform shape]
            \draw[-] (1,1) -- (0.3,1);
            \draw[-] (0.3,1) -- (-0.3,-1);
            \draw[-] (-0.3,-1) -- (-1,-1);
            \end{tikzpicture}};
        \draw[->] (pos_ctrl.east) |- node[font=\small, xshift=0.6cm, yshift=0.0cm]{$[\phi_d, \theta_d]$} (saturation.west);
        \node[block, right=1.2cm of saturation] (att_ctrl) {Attitude\\ Controller};
        \draw[->] (saturation.east) -- node[font=\small]{$[\bar{\phi}_d, \bar{\theta}_d]$} (att_ctrl.west);
        \node[block, right=0.55cm of att_ctrl, minimum height=1.0cm, minimum width=0.5cm, text=blue!80!black, draw=blue!80!black] (safety_filter) {Robust \\ Safety\\ Filter};
        \draw[->] (att_ctrl.east) -- node[font=\small]{$\bar{\boldsymbol{\tau}}_d$} (safety_filter.west);
        \node[block, right=7.0cm of pos_ctrl, minimum height=1.5cm] (quad) {Quadrotor\\ Dynamics};
        \draw[->] (pos_ctrl.east) |- ($(quad.west) + (0,0.4)$) node[font=\small, xshift=-3.3cm, yshift = 0.2cm]{$F$};
        \draw[->] (safety_filter.east) |- ($(quad.west) + (0,-0.3)$)node[font=\small, xshift = -0.3cm, yshift = 0.2cm]{$\boldsymbol{\tau}_s$};
        \node[coordinate, left=0.5cm of att_ctrl, yshift = -0.3cm] (psi_in) {};
        \draw[->] (psi_in) |- ($(att_ctrl.west) + (0,-0.3)$) node[font=\small,  xshift=-0.7cm]{$\psi_d$} (att_ctrl); 
        \draw[->] (input) -- node[font=\small]{$\boldsymbol{p}_d$} (sum1);
        \draw[->] (quad.south) |- ($(sum1.south) + (0,-1.2)$) -| (sum1.south) node[font=\small, xshift = 0.2cm, yshift = -0.2cm]{$\boldsymbol{p}$};
        \draw[->] (quad.south) |- ($(sum1.south) + (0,-1.2)$) -| (att_ctrl.south) node[font=\small, xshift = 0.4cm, yshift = -0.4cm]{$[\boldsymbol{\eta}, \boldsymbol{\omega}]$};
        \draw[->] (quad.south) |- ($(sum1.south) + (0,-1.2)$) -| (safety_filter.south) node[font=\small, xshift = 0.45cm, yshift = -0.3cm]{$[\boldsymbol{\eta}, \boldsymbol{\omega}]$};
        \draw[->] ($(quad.north)+(0,0.4)$) -- node[right, font=\small]{$\boldsymbol{d}(t)$} (quad.north);
    \end{tikzpicture}
    \caption{Block diagram of the hierarchical control structure for the quadrotor. 
    The outer loop controls position $p$ by generating desired thrust $F_d$, roll, and pitch commands, 
    while the inner loop tracks these commands using thrust $F$ and torques $\tau$. External disturbances $d(t)$ affect the quadrotor dynamics as external torques. 
    The robust safety filter modifies the clipped desired torque commands $\bar{\boldsymbol{\tau}}_d$ to ensure safety.}
    \label{fig:control_structure}
\end{figure}

Here, altitude ($z$) is regulated by a separate independent loop that maintains constant hover thrust; only the horizontal position errors generate the roll and pitch attitude references that are passed to the safety filter.
The outer-loop position controller is a PD controller:
\begin{align}
    \phi_d &= k_{py}(y_d - y) + k_{dy}(0 - v_y) \\
    \theta_d &= k_{px}(x_d - x) + k_{dx}(0 - v_x)
\end{align}

These desired roll and pitch angles are then saturated to ensure they remain within feasible limits:
\begin{align}
    \bar{\phi}_d &= \max(\min(\phi_d, \phi_{\max}), -\phi_{\max}) \\
    \bar{\theta}_d &= \max(\min(\theta_d, \theta_{\max}), -\theta_{\max})
\end{align}
where $\phi_{\max} = \theta_{\max} = 60^\circ$.

The inner-loop attitude controller is also a PD controller:
\begin{align}
    \tau_{d,x} &= k_{p\phi}(\phi - \phi_d) + k_{d\phi}(0 - p) \\
    \tau_{d,y} &= k_{p\theta}(\theta - \theta_d) + k_{d\theta}(0 - q) \\
    \tau_{d,z} &= k_{p\psi}(\psi - \psi_d) + k_{d\psi}(0 - r)
\end{align}
with gains as listed in Table~\ref{tab:params}.

The control inputs are then clipped ($\bar{\boldsymbol{\tau}}_d$) to ensure they remain within feasible limits before being sent to the safety filter.

\begin{figure*}[!t]
    \centering
    \includegraphics[width=\textwidth]{./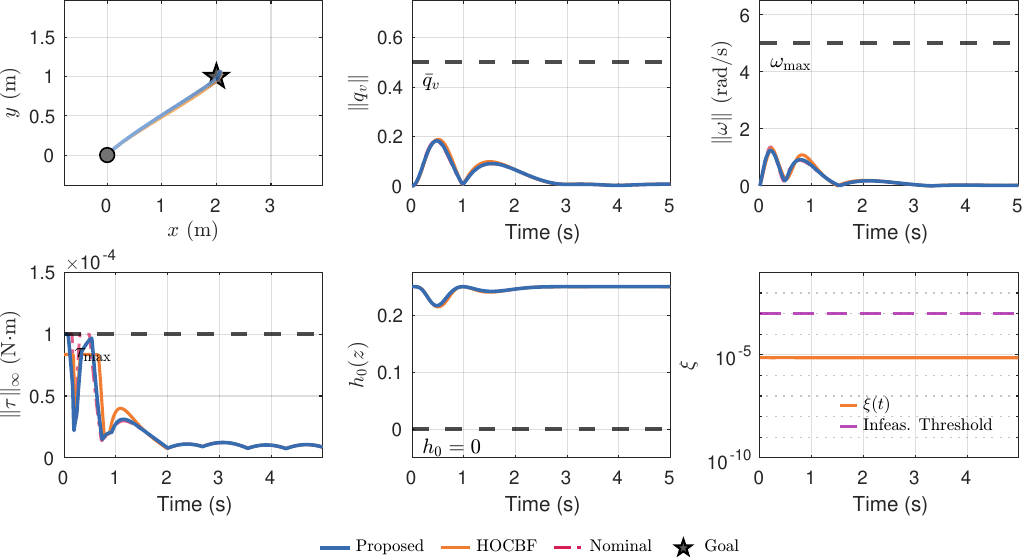}
    \caption{Scenario~I (set-point with small initial errors): three-way comparison of the proposed filter, the HOCBF baseline, and the unfiltered nominal controller. Top row: 2D trajectory, $\|\bm{q}_v\|$ vs.\ $\bar{q}_v$, $\|\bm{\omega}\|$ vs.\ $\omega_{\max}$. Bottom row: $\|\bm{\tau}\|_\infty$ vs.\ $\tau_{\max}$, HOCBF barrier $h_0(\bm{z})$ (positive values indicate safety), and HOCBF slack $\xi(t)$. All three methods remain safely inside the constraint set throughout.}
    \label{fig:scenario_I}
\end{figure*}

\begin{figure*}[!t]
    \centering
    \includegraphics[width=\textwidth]{./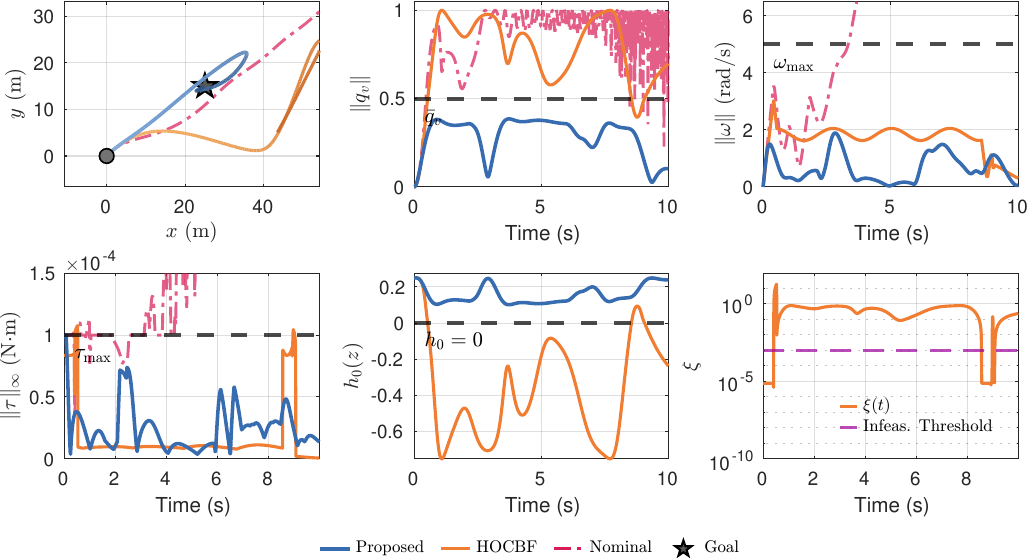}
    \caption{Scenario~II (set-point with large initial errors): three-way comparison. The outer-loop position controller saturates the inner-loop attitude command, driving both the HOCBF baseline and the unfiltered system across the safety boundary. The proposed filter maintains $\|\bm{q}_v\| \leq \bar{q}_v$ throughout. The slack panel shows persistent HOCBF infeasibility.}
    \label{fig:scenario_II}
\end{figure*}

\begin{figure*}[!t]
    \centering
    \includegraphics[width=\textwidth]{./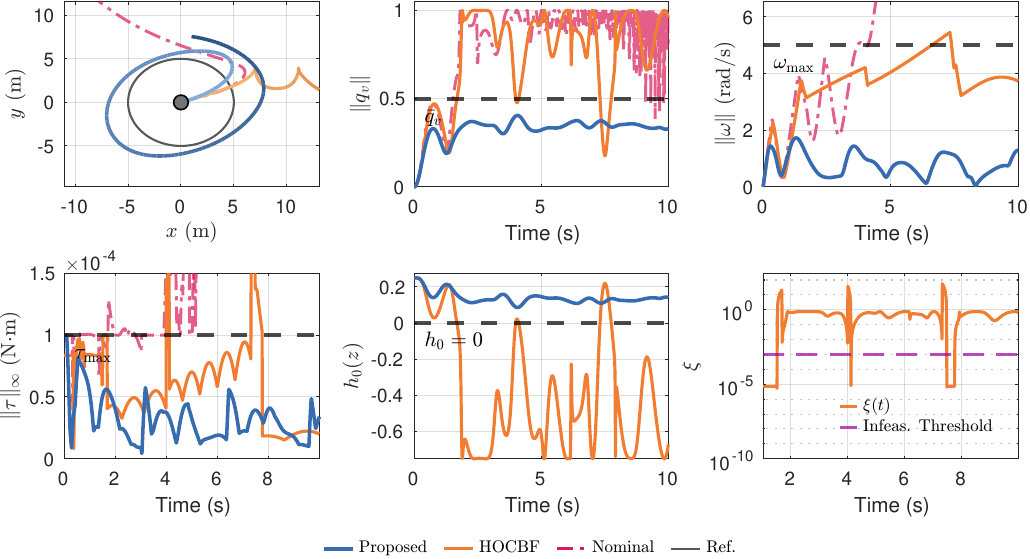}
    \caption{Scenario~III (high-frequency circular tracking, $r=5$\,m, $\dot{\psi}=1$\,rad/s): three-way comparison. Persistent excitation of the attitude dynamics drives the HOCBF and the unfiltered system to constraint violation; the proposed filter respects the safety bound at every instant.}
    \label{fig:scenario_III}
\end{figure*}

\subsection{Safety Filter Implementation}
Such a control structure can lead to unsafe behavior if the position errors become too large.
Large position errors can result in excessive roll and pitch commands, which may saturate the attitude controller and lead to instability.
To mitigate this, we implement the robust safety filter from Section~\ref{sec:quaternion_safety_filter} that constrains the attitude to safe bounds, ensuring stability even under external disturbances.

The safety state is $\bm{z} = [\bm{q}_v^T,\, \bm{\omega}^T]^T \in \mathbb{R}^6$, governed by~\eqref{eq:nonlinear_system} with the system matrices and sector-bound gain $\gamma$ as defined in Section~\ref{sec:quaternion_safety_filter}, and with the control input identified as the quadrotor torque ($\bm{u} \equiv \bm{\tau}$).

The attitude safety constraint is expressed in quaternion coordinates as
\begin{equation}
    \|\bm{q}_v\| \leq \bar{q}_v = \sin\!\left(\frac{\theta_{\max}}{2}\right) = 0.500,\quad \theta_{\max} = 60^\circ,
\end{equation}
together with $\|\bm{\omega}\|_2 \leq \omega_{\max} = 5$\,rad/s.
The disturbance bound is $d_{\max} = 1\times10^{-5}$\,N$\cdot$m and the per-axis torque limits are
$|\tau_i| \leq \tau_{\max} = 1\times10^{-4}$\,N$\cdot$m, $i=1,2,3$.

The RCI ellipsoid $\mathcal{E}(\mathbf{P}^*)$ and backup controller $\bm{\tau}_b = \mathbf{K}^*\bm{z}$ are obtained by solving Problem~\ref{prob:quaternion_synthesis}.
Since Problem~\ref{prob:quaternion_synthesis} is a convex SDP for fixed $(\alpha,\tau)$, we perform a $5\times5$ grid search with both parameters drawn from $\{10^{-1},\,10^{-0.25},\,10^{0.5},\,10^{1.25},\,10^{2}\}$, selecting the pair that maximises $\log\det\mathbf{Q}$.

The safety filter interpolates between the backup controller and the nominal controller according to \eqref{eq:safety_filter_law}, with Lyapunov function $V(\bm{z}) = \bm{z}^T\mathbf{P}^*\bm{z}$ and mixing coefficient defined by \eqref{eq:mixing_function} using $h_{\min} = 0.1$ and $h_{\max} = 0.9$.
The RCI boundary is at $V(\bm{z}) = 1$, and the filter is inactive ($\alpha = 0$) when $V(\bm{z}) \leq h_{\min}$.
To compensate for the gyroscopic term (Remark~\ref{rem:coriolis}), the applied torque is augmented as $\bm{\tau}_{\text{applied}} = \bm{\tau} + \bm{\omega}\times\mathbf{J}\bm{\omega}$.

\subsection{Simulation Results}
We evaluate the performance of the robust safety filter through numerical simulations in MATLAB
using the CVX toolbox \cite{cvx} with the MOSEK solver \cite{mosek}.
Three scenarios are considered:
\begin{itemize}
    \item \textbf{Scenario~I}: set-point tracking with a small goal at $(x,y)=(2,1)$\,m over $T=5$\,s.
    \item \textbf{Scenario~II}: set-point tracking with a distant goal at $(x,y)=(25,15)$\,m over $T=10$\,s, producing large initial position errors.
    \item \textbf{Scenario~III}: circular trajectory with radius $5$\,m and angular speed $1$\,rad/s over $T=10$\,s.
\end{itemize}

The disturbance is modeled as $\bar{\bm{d}}(t) = [\sin(2t),\, \cos(2t),\, 0]^T$ and is active in all scenarios. Each scenario figure (Figs.~\ref{fig:scenario_I}--\ref{fig:scenario_III}) presents a three-way comparison among the proposed filter (deep blue), the HOCBF baseline (orange), and the unfiltered nominal controller (red), under identical disturbance and initial-condition profiles. Six panels report, top-to-bottom and left-to-right: the 2D position trajectory with time-graded colour, the safety-relevant norms $\|\bm{q}_v\|$ and $\|\bm{\omega}\|$ versus their bounds, the worst-axis torque $\|\bm{\tau}\|_\infty$ versus $\tau_{\max}$, the HOCBF barrier $h_0(\bm{z}) = \bar{q}_v^2 - \|\bm{q}_v\|^2$ (positive when safe), and the HOCBF slack variable $\xi(t)$ with infeasibility threshold. The HOCBF baseline used for comparison is formally defined in Section~\ref{sec:cbf_comparison}; the following describes what each figure shows.

Figure~\ref{fig:scenario_I} shows Scenario~I. The position error is small, so the inner-loop attitude demand stays well within the safe region for all three methods. Both the proposed filter and the HOCBF remain effectively inactive ($\xi(t)$ at the numerical noise floor), and all three trajectories converge to the goal. The HOCBF and the proposed filter are indistinguishable from the nominal controller in the position panel, confirming that neither filter causes meaningful deviation when intervention is not required.

Figure~\ref{fig:scenario_II} shows Scenario~II. Large initial position errors saturate the outer-loop position controller, driving the commanded attitude into a regime where the HOCBF condition cannot be satisfied within the actuator bounds. The slack panel shows $\xi(t)$ above the threshold throughout the maneuver, the barrier $h_0$ becomes negative within the first second, and $\|\bm{q}_v\|$ saturates near unity. The unfiltered nominal controller exhibits the same failure mode. The proposed filter maintains $\|\bm{q}_v\| \leq \bar{q}_v$ at every instant and converges to the setpoint.

Figure~\ref{fig:scenario_III} shows Scenario~III. The aggressive circular reference continuously excites the attitude dynamics. Without intervention, the unfiltered system becomes unstable; the HOCBF baseline first violates the constraint at $t = 1.69$\,s and remains in the unsafe set thereafter, with $\xi(t)$ persistently active. The proposed filter respects the safety bound throughout the maneuver while tracking the circular reference.

\subsection{Comparison with a Higher-Order CBF Baseline}
\label{sec:cbf_comparison}

To position the proposed filter against the dominant safety-filter paradigm, we compare against a Higher-Order Control Barrier Function (HOCBF) baseline~\cite{xiao2019control,nguyen2016exponential} solved as a QP~\cite{ames2016control,ames_control_2019}. The natural barrier $h_0(\bm{z}) = \bar{q}_v^2 - \|\bm{q}_v\|^2$ has relative degree two, so we introduce $h_1 = \dot{h}_0 + \kappa_0 h_0$ and enforce $\dot{h}_1 + \kappa_1 h_1 \geq 0$. Using the identities $\bm{q}_v^T([\bm{q}_v]_\times\bm{\omega}) = 0$ and $\bm{\omega}^T([\bm{q}_v]_\times\bm{\omega}) = 0$, the condition reduces to the slack-relaxed QP~\cite{ames2016control}:
\begin{equation}
\label{eq:cbf_qp}
\begin{aligned}
\min_{\bm{u},\,\xi}\;\;& \tfrac{1}{2}\|\bm{u}-\bm{u}_{\text{nom}}\|^2 + \tfrac{p}{2}\xi^2 \\
\text{s.t.}\;\;& L_g h_1\,\bm{u} + \xi \geq -\,\mathrm{drift}_{h_1}(\bm{z}) - \kappa_1\,h_1(\bm{z}), \\
& -\tau_{\max}\bm{1} \leq \bm{u} \leq \tau_{\max}\bm{1},\quad \xi \geq 0,
\end{aligned}
\end{equation}
where $\mathrm{drift}_{h_1} = \tfrac{1}{2}(\bm{q}_v^T\bm{\omega})^2 - \tfrac{1}{2}q_0^2\|\bm{\omega}\|^2 + \kappa_0\dot{h}_0$ and $L_g h_1 = -q_0\bm{q}_v^T\mathbf{J}^{-1}$, with $\kappa_0=20$, $\kappa_1=5$, $p=10^2$. Disturbances are omitted since this is the standard non-robust baseline. Outside the safe set ($h_0 < 0$), the QP is replaced by a saturated proportional braking law.

Table~\ref{tab:cbf_comparison} reports the comparison across all three
scenarios. In Scenario~I both methods satisfy the safety constraint. In Scenarios~II
and~III, the position-loop saturation drives the inner-loop attitude demand
into a regime where the HOCBF cannot satisfy its safety condition within
the actuator limits: the slack variable $\xi$ is active on $91\%$ and
$82\%$ of timesteps, the constraint is first violated at $t = 0.54$\,s and
$t = 1.69$\,s respectively, and $\|\bm{q}_v\|$ rapidly saturates near unity
(a full $180^\circ$ rotation). The proposed filter maintains
$\|\bm{q}_v\| \leq \bar{q}_v$ throughout all three scenarios.

\begin{table}[!t]
\centering
\caption{Comparison with the HOCBF baseline. The HOCBF is feasible only in
Scenario~I, where neither filter needs to intervene; it loses feasibility
under input saturation in Scenarios~II and~III, while the proposed filter
maintains safety in all cases.}
\label{tab:cbf_comparison}
\setlength{\tabcolsep}{4pt}
\renewcommand{\arraystretch}{1.1}
\begin{tabular}{l c c c}
\toprule
Metric & Scen.~I & Scen.~II & Scen.~III \\
\midrule
$\|\mathbf{q}_v\|_{\max}$, Proposed   & 0.182 & 0.385 & 0.404 \\
$\|\mathbf{q}_v\|_{\max}$, HOCBF      & 0.188 & 1.000 & 1.000 \\
$\|\mathbf{q}_v\|_{\max}$, Nominal    & 0.183 & 1.000 & 1.000 \\
Constraint $\bar{q}_v$                & 0.500 & 0.500 & 0.500 \\
\midrule
HOCBF slack-active (\% steps)         & 0\,\% & 91\,\% & 82\,\% \\
HOCBF first violation time            & --    & 0.54\,s & 1.69\,s \\
Proposed maintains safety             & \checkmark & \checkmark & \checkmark \\
HOCBF maintains safety                & \checkmark & $\times$ & $\times$ \\
\bottomrule
\end{tabular}
\end{table}

These results isolate the structural advantage of the LMI synthesis: the
LMI of Proposition~\ref{prop:input_constraint} binds the backup gain
$\mathbf{K}$ such that $|\mathbf{K}\bm{z}| \leq \tau_{\max}$ for all
$\bm{z} \in \mathcal{E}(\mathbf{P})$, by construction. The HOCBF baseline
has no analogous guarantee, and the slack relaxation quantifies the
resulting shortfall directly in Figs.~\ref{fig:scenario_II}
and~\ref{fig:scenario_III}: as $\xi(t)$ rises above its threshold, the
HOCBF trajectory crosses $\bar{q}_v$ and the barrier $h_0$ becomes
negative, while the proposed filter maintains $h_0 \geq 0$ throughout.

\subsection{Conservatism Analysis: Sector Bound vs.\ Linearization}
\label{sec:conservatism_analysis}

A separate question is how much the exact sector bound of
Lemma~\ref{lem:sector_bound} reduces conservatism relative to the
linearization-based formulation of Remark~\ref{rem:nonlinear_extension},
which treats $\bm{\phi}$ as a state-independent uncertainty with
$\|\bm{\phi}\| \leq \Delta_{\max} = \gamma\,\omega_{\max}$. We quantify
the gap by solving both LMIs over a two-dimensional grid in design space
$(\theta_{\max}, \omega_{\max})$, with $\theta_{\max} \in [5^\circ, 60^\circ]$
in $5^\circ$ steps and $\omega_{\max} \in [1, 10]$\,rad/s in $1$\,rad/s
steps (a total of 120 grid points). At each point we record whether each
formulation is feasible and, when feasible, the certified ellipsoid volume
$\mathrm{vol}(\mathcal{E}(\mathbf{P})) = \pi^{n/2}/\Gamma(n/2+1)\sqrt{\det\mathbf{Q}}$
with $n = 6$.

Figure~\ref{fig:conservatism_2D} summarises the result. The left panel
shows the RCI coverage of the constraint box for the proposed method,
$\mathrm{vol}(\mathcal{E}(\mathbf{P}^*)) / [(2\bar{q}_v)^3 (2\omega_{\max})^3]$,
which is feasible across the entire design space. The right panel shows
the multiplicative volume gain
$\mathrm{vol}(\text{Proposed})/\mathrm{vol}(\text{Linearized})$ over the
overlap region where both formulations are feasible.

\begin{figure}[!t]
\centering
\includegraphics[width=\columnwidth]{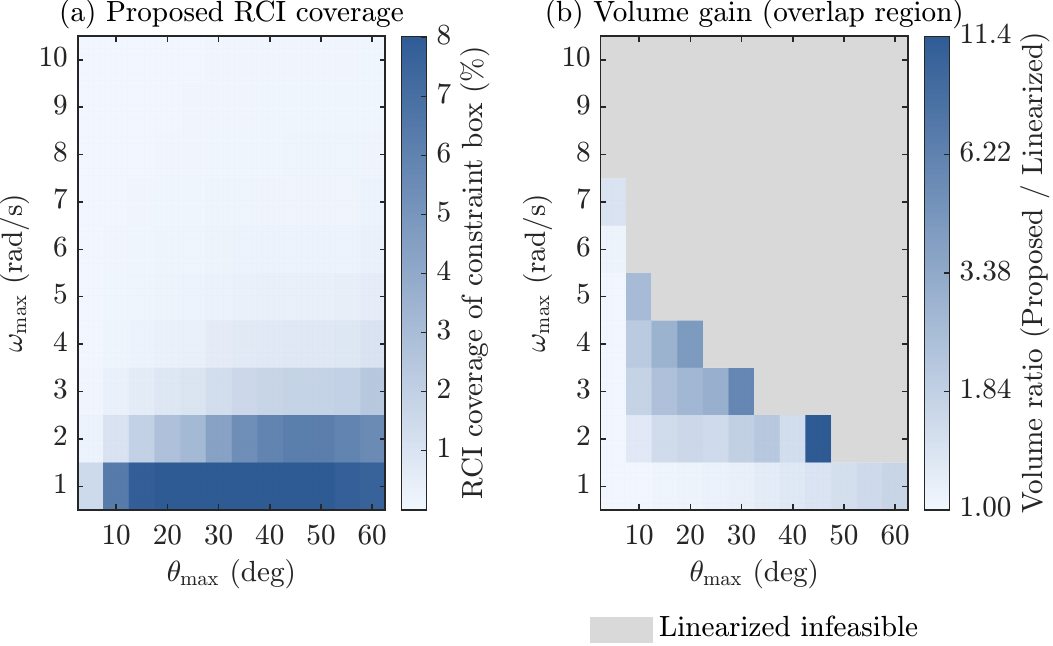}
\caption{Conservatism comparison over the $(\theta_{\max}, \omega_{\max})$
design space. \textbf{(a)} RCI coverage of the constraint box for the
proposed sector-bound LMI of Theorem~\ref{thm:lmi_quaternion};
the proposed method is feasible across the entire tested envelope.
\textbf{(b)} Volume gain
$\mathrm{vol}(\text{Proposed})/\mathrm{vol}(\text{Linearized})$ over the
overlap region where both LMIs are feasible. Grey cells indicate
linearized infeasibility. Colour scale is logarithmic; tick labels show
real multipliers.}
\label{fig:conservatism_2D}
\end{figure}

The proposed sector-bound LMI is feasible at all 120 grid points. The
linearization-based formulation is feasible at only 35 grid points
($29\%$), clustered at low $\theta_{\max}$ and low $\omega_{\max}$. As
either bound grows, the worst-case envelope
$\Delta_{\max} = \gamma\,\omega_{\max}$ exceeds the level the linearized
LMI can tolerate within the constraint set, and no feasible $(\mathbf{Q},
\mathbf{Y})$ exists. Where both formulations are feasible, the proposed
method produces a substantially larger RCI ellipsoid, with a median
volume gain of $1.49\times$ and a maximum of $11.4\times$.

\section{Conclusion}\label{sec:conclusion}
This paper presented an ellipsoidal set-theoretic framework for robust safety filter synthesis in constrained linear systems. The proposed LMI-based design simultaneously computes a maximal ellipsoidal RCI set and its associated state-feedback control law, ensuring safety by minimally modifying nominal control inputs only when necessary. The framework was extended to nonlinear attitude dynamics via a rigorous quaternion-based formulation: exact sector bounds on the quaternion kinematic nonlinearity were derived and incorporated into the LMI synthesis without approximation, yielding formal safety guarantees beyond the reach of linearization-based approaches. Numerical simulations on a quadrotor system validated the filter's effectiveness under external disturbances, large initial errors, and high-frequency trajectory tracking. A comparison against a standard HOCBF baseline showed that the proposed filter maintains safety in all scenarios while the HOCBF baseline loses feasibility under input saturation. A conservatism analysis over a 2D design-space sweep established that the exact sector bound enables certified synthesis across the entire tested envelope, whereas the faithful linearization-based formulation of Remark~\ref{rem:nonlinear_extension} is feasible on only $29\%$ of the same envelope. Future work may explore adaptive disturbance estimation and real-time hardware implementation of the safety filter.

\bibliographystyle{IEEEtran}
\bibliography{refList}

@article{wabersich_data-driven_2023,
	title = {Data-{Driven} {Safety} {Filters}: {Hamilton}-{Jacobi} {Reachability}, {Control} {Barrier} {Functions}, and {Predictive} {Methods} for {Uncertain} {Systems}},
	volume = {43},
	copyright = {https://ieeexplore.ieee.org/Xplorehelp/downloads/license-information/IEEE.html},
	issn = {1066-033X, 1941-000X},
	shorttitle = {Data-{Driven} {Safety} {Filters}},
	url = {https://ieeexplore.ieee.org/document/10266799/},
	doi = {10.1109/MCS.2023.3291885},
	language = {en},
	number = {5},
	urldate = {2025-03-09},
	journal = {IEEE Control Systems},
	author = {Wabersich, Kim P. and Taylor, Andrew J. and Choi, Jason J. and Sreenath, Koushil and Tomlin, Claire J. and Ames, Aaron D. and Zeilinger, Melanie N.},
	month = oct,
	year = {2023},
	pages = {137--177},
	file = {PDF:/Users/rezapordal/Zotero/storage/7CKKQ5AC/Wabersich et al. - 2023 - Data-Driven Safety Filters Hamilton-Jacobi Reachability, Control Barrier Functions, and Predictive.pdf:application/pdf},
}

@misc{hsu_safety_2023,
	title = {The {Safety} {Filter}: {A} {Unified} {View} of {Safety}-{Critical} {Control} in {Autonomous} {Systems}},
	shorttitle = {The {Safety} {Filter}},
	url = {http://arxiv.org/abs/2309.05837},
	doi = {10.48550/arXiv.2309.05837},
	abstract = {Recent years have seen significant progress in the realm of robot autonomy, accompanied by the expanding reach of robotic technologies. However, the emergence of new deployment domains brings unprecedented challenges in ensuring safe operation of these systems, which remains as crucial as ever. While traditional model-based safe control methods struggle with generalizability and scalability, emerging datadriven approaches tend to lack well-understood guarantees, which can result in unpredictable catastrophic failures. Successful deployment of the next generation of autonomous robots will require integrating the strengths of both paradigms. This article provides a review of safety filter approaches, highlighting important connections between existing techniques and proposing a unified technical framework to understand, compare, and combine them. The new unified view exposes a shared modular structure across a range of seemingly disparate safety filter classes and naturally suggests directions for future progress towards more scalable synthesis, robust monitoring, and efficient intervention.},
	language = {en},
	urldate = {2025-03-09},
	publisher = {arXiv},
	author = {Hsu, Kai-Chieh and Hu, Haimin and Fisac, Jaime Fernández},
	month = sep,
	year = {2023},
	note = {arXiv:2309.05837 [eess]},
	keywords = {Computer Science - Machine Learning, Computer Science - Robotics, Computer Science - Systems and Control, Electrical Engineering and Systems Science - Systems and Control},
	file = {PDF:/Users/rezapordal/Zotero/storage/N745XSB2/Hsu et al. - 2023 - The Safety Filter A Unified View of Safety-Critical Control in Autonomous Systems.pdf:application/pdf},
}

@misc{alan_control_2022,
	title = {Control {Barrier} {Functions} and {Input}-to-{State} {Safety} with {Application} to {Automated} {Vehicles}},
	url = {http://arxiv.org/abs/2206.03568},
	doi = {10.48550/arXiv.2206.03568},
	abstract = {Balancing safety and performance is one of the predominant challenges in modern control system design. Moreover, it is crucial to robustly ensure safety without inducing unnecessary conservativeness that degrades performance. In this work we present a constructive approach for safety-critical control synthesis via Control Barrier Functions (CBF). By ﬁltering a hand-designed controller via a CBF, we are able to attain performant behavior while providing rigorous guarantees of safety. In the face of disturbances, robust safety and performance are simultaneously achieved through the notion of Input-to-State Safety (ISSf). We take a tutorial approach by developing the CBF-design methodology in parallel with an inverted pendulum example, making the challenges and sensitivities in the design process concrete. To establish the capability of the proposed approach, we consider the practical setting of safety-critical design via CBFs for a connected automated vehicle (CAV) in the form of a class-8 truck without a trailer. Through experimentation we see the impact of unmodeled disturbances in the truck’s actuation system on the safety guarantees provided by CBFs. We characterize these disturbances and using ISSf, produce a robust controller that achieves safety without conceding performance. We evaluate our design both in simulation, and for the ﬁrst time on an automotive system, experimentally.},
	language = {en},
	urldate = {2025-03-09},
	publisher = {arXiv},
	author = {Alan, Anil and Taylor, Andrew J. and He, Chaozhe R. and Ames, Aaron D. and Orosz, Gabor},
	month = jun,
	year = {2022},
	note = {arXiv:2206.03568 [eess]},
	keywords = {Computer Science - Robotics, Computer Science - Systems and Control, Electrical Engineering and Systems Science - Systems and Control},
	file = {PDF:/Users/rezapordal/Zotero/storage/T85KA3MK/Alan et al. - 2022 - Control Barrier Functions and Input-to-State Safety with Application to Automated Vehicles.pdf:application/pdf},
}

@book{blanchini_set-theoretic_2015,
	address = {Cham},
	series = {Systems \& {Control}: {Foundations} \& {Applications}},
	title = {Set-{Theoretic} {Methods} in {Control}},
	copyright = {https://www.springernature.com/gp/researchers/text-and-data-mining},
	isbn = {978-3-319-17932-2 978-3-319-17933-9},
	url = {https://link.springer.com/10.1007/978-3-319-17933-9},
	language = {en},
	urldate = {2025-03-09},
	publisher = {Springer International Publishing},
	author = {Blanchini, Franco and Miani, Stefano},
	year = {2015},
	doi = {10.1007/978-3-319-17933-9},
	file = {PDF:/Users/rezapordal/Zotero/storage/A6SYAMF3/Blanchini and Miani - 2015 - Set-Theoretic Methods in Control.pdf:application/pdf},
}

@article{usoro1982ellipsoidal,
  title={Ellipsoidal set-theoretic control synthesis},
  author={Usoro, PB and Schweppe, FC and Wormley, DN and Gould, LA},
  journal={Journal of Dynamic Systems, Measurement, and Control},
  volume={104},
  number={4},
  pages={331--336},
  year={1982},
  publisher={American Society of Mechanical Engineers Digital Collection}
}

@misc{chen_backup_2021,
	title = {Backup {Control} {Barrier} {Functions}: {Formulation} and {Comparative} {Study}},
	shorttitle = {Backup {Control} {Barrier} {Functions}},
	url = {http://arxiv.org/abs/2104.11332},
	doi = {10.48550/arXiv.2104.11332},
	abstract = {The backup control barrier function (CBF) was recently proposed as a tractable formulation that guarantees the feasibility of the CBF quadratic programming (QP) via an implicitly deﬁned control invariant set. The control invariant set is based on a ﬁxed backup policy and evaluated online by forward integrating the dynamics under the backup policy. This paper is intended as a tutorial of the backup CBF approach and a comparative study to some benchmarks. First, the backup CBF approach is presented step by step with the underlying math explained in detail. Second, we prove that the backup CBF always has a relative degree 1 under mild assumptions. Third, the backup CBF approach is compared with benchmarks such as Hamilton Jacobi PDE and Sum-of-Squares on the computation of control invariant sets, which shows that one can obtain a control invariant set close to the maximum control invariant set under a good backup policy for many practical problems.},
	language = {en},
	urldate = {2025-03-09},
	publisher = {arXiv},
	author = {Chen, Yuxiao and Jankovic, Mrdjan and Santillo, Mario and Ames, Aaron D.},
	month = apr,
	year = {2021},
	note = {arXiv:2104.11332 [eess]},
	keywords = {Computer Science - Systems and Control, Electrical Engineering and Systems Science - Systems and Control, Mathematics - Optimization and Control},
	file = {PDF:/Users/rezapordal/Zotero/storage/GN3LLPGY/Chen et al. - 2021 - Backup Control Barrier Functions Formulation and Comparative Study.pdf:application/pdf},
}

@article{wijk_disturbance-robust_2024,
	title = {Disturbance-{Robust} {Backup} {Control} {Barrier} {Functions}: {Safety} {Under} {Uncertain} {Dynamics}},
	volume = {8},
	issn = {2475-1456},
	shorttitle = {Disturbance-{Robust} {Backup} {Control} {Barrier} {Functions}},
	url = {http://arxiv.org/abs/2409.07700},
	doi = {10.1109/LCSYS.2024.3514998},
	abstract = {Obtaining a controlled invariant set is crucial for safety-critical control with control barrier functions (CBFs) but is non-trivial for complex nonlinear systems and constraints. Backup control barrier functions allow such sets to be constructed online in a computationally tractable manner by examining the evolution (or flow) of the system under a known backup control law. However, for systems with unmodeled disturbances, this flow cannot be directly computed, making the current methods inadequate for assuring safety in these scenarios. To address this gap, we leverage bounds on the nominal and disturbed flow to compute a forward invariant set online by ensuring safety of an expanding norm ball tube centered around the nominal system evolution. We prove that this set results in robust control constraints which guarantee safety of the disturbed system via our DisturbanceRobust Backup Control Barrier Function (DR-bCBF) solution. The efficacy of the proposed framework is demonstrated in simulation, applied to a double integrator problem and a rigid body spacecraft rotation problem with rate constraints.},
	language = {en},
	urldate = {2025-03-09},
	journal = {IEEE Control Systems Letters},
	author = {Wijk, David E. J. van and Coogan, Samuel and Molnar, Tamas G. and Majji, Manoranjan and Hobbs, Kerianne L.},
	year = {2024},
	note = {arXiv:2409.07700 [eess]},
	keywords = {Computer Science - Robotics, Computer Science - Systems and Control, Electrical Engineering and Systems Science - Systems and Control, Mathematics - Dynamical Systems},
	pages = {2817--2822},
	file = {PDF:/Users/rezapordal/Zotero/storage/EZD2AFSQ/Wijk et al. - 2024 - Disturbance-Robust Backup Control Barrier Functions Safety Under Uncertain Dynamics.pdf:application/pdf},
}

@article{singletary_onboard_2022,
	title = {Onboard {Safety} {Guarantees} for {Racing} {Drones}: {High}-{Speed} {Geofencing} {With} {Control} {Barrier} {Functions}},
	volume = {7},
	copyright = {https://ieeexplore.ieee.org/Xplorehelp/downloads/license-information/IEEE.html},
	issn = {2377-3766, 2377-3774},
	shorttitle = {Onboard {Safety} {Guarantees} for {Racing} {Drones}},
	url = {https://ieeexplore.ieee.org/document/9691815/},
	doi = {10.1109/LRA.2022.3144777},
	abstract = {This letter details the theory and implementation behind practically ensuring safety of remotely piloted racing drones. We demonstrate robust and practical safety guarantees on a 7” racing drone at speeds exceeding 100 km/h, utilizing only online computations on a 10 g micro-controller. To achieve this goal, we utilize the framework of control barrier functions (CBFs) which give guaranteed safety encoded as forward set invariance. To make this methodology practically applicable, we present an implicitly deﬁned CBF which leverages backup controllers to enable gradient-free evaluations that ensure safety. The method applied to hardware results in smooth, minimally conservative alterations of the pilots’ desired inputs, enabling them to push the limits of their drone without fear of crashing. Moreover, the method works in conjunction with the preexisting ﬂight controller, resulting in unaltered ﬂight when there are no nearby safety risks. Additional beneﬁts include safety and stability of the drone when losing lineof-sight or in the event of radio failure.},
	language = {en},
	number = {2},
	urldate = {2025-03-09},
	journal = {IEEE Robotics and Automation Letters},
	author = {Singletary, Andrew and Swann, Aiden and Chen, Yuxiao and Ames, Aaron D.},
	month = apr,
	year = {2022},
	pages = {2897--2904},
	file = {Learning Predictive Safety Filter via Decomposition of Robust Invariant  Set:/Users/rezapordal/Zotero/storage/3NRP45S6/Learning Predictive Safety Filter via Decomposition of Robust Invariant  Set.pdf:application/pdf;PDF:/Users/rezapordal/Zotero/storage/RXDLBPCJ/Singletary et al. - 2022 - Onboard Safety Guarantees for Racing Drones High-Speed Geofencing With Control Barrier Functions.pdf:application/pdf},
}

@misc{bansal_hamilton-jacobi_2017,
	title = {Hamilton-{Jacobi} {Reachability}: {A} {Brief} {Overview} and {Recent} {Advances}},
	shorttitle = {Hamilton-{Jacobi} {Reachability}},
	url = {http://arxiv.org/abs/1709.07523},
	doi = {10.48550/arXiv.1709.07523},
	abstract = {Hamilton-Jacobi (HJ) reachability analysis is an important formal verification method for guaranteeing performance and safety properties of dynamical systems; it has been applied to many small-scale systems in the past decade. Its advantages include compatibility with general nonlinear system dynamics, formal treatment of bounded disturbances, and the availability of well-developed numerical tools. The main challenge is addressing its exponential computational complexity with respect to the number of state variables. In this tutorial, we present an overview of basic HJ reachability theory and provide instructions for using the most recent numerical tools, including an efficient GPU-parallelized implementation of a Level Set Toolbox for computing reachable sets. In addition, we review some of the current work in high-dimensional HJ reachability to show how the dimensionality challenge can be alleviated via various general theoretical and application-specific insights.},
	urldate = {2025-03-09},
	publisher = {arXiv},
	author = {Bansal, Somil and Chen, Mo and Herbert, Sylvia and Tomlin, Claire J.},
	month = sep,
	year = {2017},
	note = {arXiv:1709.07523 [cs]},
	keywords = {Computer Science - Systems and Control, Mathematics - Optimization and Control, Mathematics - Dynamical Systems},
	file = {Full Text PDF:/Users/rezapordal/Zotero/storage/69BT5HT5/Bansal et al. - 2017 - Hamilton-Jacobi Reachability A Brief Overview and Recent Advances.pdf:application/pdf;Snapshot:/Users/rezapordal/Zotero/storage/NZJ33328/1709.html:text/html},
}

@misc{choi_robust_2021,
	title = {Robust {Control} {Barrier}-{Value} {Functions} for {Safety}-{Critical} {Control}},
	url = {http://arxiv.org/abs/2104.02808},
	doi = {10.48550/arXiv.2104.02808},
	abstract = {This paper works towards unifying two popular approaches in the safety control community: Hamilton-Jacobi (HJ) reachability and Control Barrier Functions (CBFs). HJ Reachability has methods for direct construction of value functions that provide safety guarantees and safe controllers, however the online implementation can be overly conservative and/or rely on chattering bang-bang control. The CBF community has methods for safe-guarding controllers in the form of point-wise optimization using quadratic programs (CBF-QP), where the CBF-based safety certificate is used as a constraint. However, finding a valid CBF for a general dynamical system is challenging. This paper unifies these two methods by introducing a new reachability formulation inspired by the structure of CBFs to construct a Control Barrier-Value Function (CBVF). We verify that CBVF is a viscosity solution to a novel Hamilton-Jacobi-Isaacs Variational Inequality and preserves the same safety guarantee as the original reachability formulation. Finally, inspired by the CBF-QP, we propose a QP-based online control synthesis for systems affine in control and disturbance, whose solution is always the CBVF's optimal control signal robust to bounded disturbance. We demonstrate the benefit of using the CBVFs for double-integrator and Dubins car systems by comparing it to previous methods.},
	urldate = {2025-03-09},
	publisher = {arXiv},
	author = {Choi, Jason J. and Lee, Donggun and Sreenath, Koushil and Tomlin, Claire J. and Herbert, Sylvia L.},
	month = oct,
	year = {2021},
	note = {arXiv:2104.02808 [eess]},
	keywords = {Computer Science - Systems and Control, Electrical Engineering and Systems Science - Systems and Control},
	file = {Full Text PDF:/Users/rezapordal/Zotero/storage/USIKWWF4/Choi et al. - 2021 - Robust Control Barrier-Value Functions for Safety-Critical Control.pdf:application/pdf;Snapshot:/Users/rezapordal/Zotero/storage/RKGPM7SB/2104.html:text/html},
}

@misc{ames_control_2019,
	title = {Control {Barrier} {Functions}: {Theory} and {Applications}},
	shorttitle = {Control {Barrier} {Functions}},
	url = {http://arxiv.org/abs/1903.11199},
	doi = {10.48550/arXiv.1903.11199},
	abstract = {This paper provides an introduction and overview of recent work on control barrier functions and their use to verify and enforce safety properties in the context of (optimization based) safety-critical controllers. We survey the main technical results and discuss applications to several domains including robotic systems.},
	urldate = {2025-03-09},
	publisher = {arXiv},
	author = {Ames, Aaron D. and Coogan, Samuel and Egerstedt, Magnus and Notomista, Gennaro and Sreenath, Koushil and Tabuada, Paulo},
	month = mar,
	year = {2019},
	note = {arXiv:1903.11199 [cs]},
	keywords = {Computer Science - Systems and Control},
	file = {Full Text PDF:/Users/rezapordal/Zotero/storage/SPRBRTZR/Ames et al. - 2019 - Control Barrier Functions Theory and Applications.pdf:application/pdf;Snapshot:/Users/rezapordal/Zotero/storage/RSGNDETN/1903.html:text/html},
}

@article{jankovic_robust_2018,
	title = {Robust control barrier functions for constrained stabilization of nonlinear systems},
	volume = {96},
	issn = {00051098},
	url = {https://linkinghub.elsevier.com/retrieve/pii/S0005109818303509},
	doi = {10.1016/j.automatica.2018.07.004},
	abstract = {Quadratic Programming (QP) has been used to combine Control Lyapunov and Control Barrier Functions (CLF and CBF) to design controllers for nonlinear systems with constraints. It has been successfully applied to robotic and automotive systems. The approach could be considered an extension of the CLF-based point-wise minimum norm controller. In this paper we modify the original QP problem in a way that guarantees that V˙ {\textless} 0, if the barrier constraint is inactive, as well as local asymptotic stability under the standard (minimal) assumptions on the CLF and CBF. We also remove the assumption that the CBF has uniform relative degree one. The two design parameters of the new QP setup allow us to control how aggressive the resulting control law is when trying to satisfy the two control objectives. The paper presents the controller in a closed form making it unnecessary to solve the QP problem on line and facilitating the analysis. Next, we introduce the concept of Robust-CBF that, when combined with existing ISS-CLFs, produces controllers for constrained nonlinear systems with disturbances. In an example, a nonlinear system is used to illustrate the ease with which the proposed design method handles nonconvex constraints and disturbances and to illuminate some tradeoffs.},
	language = {en},
	urldate = {2025-03-10},
	journal = {Automatica},
	author = {Jankovic, Mrdjan},
	month = oct,
	year = {2018},
	pages = {359--367},
	file = {PDF:/Users/rezapordal/Zotero/storage/N8T32BJJ/Jankovic - 2018 - Robust control barrier functions for constrained stabilization of nonlinear systems.pdf:application/pdf},
}

@article{kolathaya_input--state_2019,
	title = {Input-to-{State} {Safety} {With} {Control} {Barrier} {Functions}},
	volume = {3},
	copyright = {https://ieeexplore.ieee.org/Xplorehelp/downloads/license-information/IEEE.html},
	issn = {2475-1456},
	url = {https://ieeexplore.ieee.org/document/8405547/},
	doi = {10.1109/LCSYS.2018.2853698},
	abstract = {This letter presents a new notion of input-to-state safe control barrier functions (ISSf-CBFs), which ensure safety of nonlinear dynamical systems under input disturbances. Similar to how safety conditions are speciﬁed in terms of forward invariance of a set, input-to-state safety (ISSf) conditions are speciﬁed in terms of forward invariance of a slightly larger set. In this context, invariance of the larger set implies that the states stay either inside or very close to the smaller safe set; and this closeness is bounded by the magnitude of the disturbances. The main contribution of the letter is the methodology used for obtaining a valid ISSf-CBF, given a control barrier function (CBF). The associated universal control law will also be provided. Towards the end, we will study uniﬁed quadratic programs (QPs) that combine control Lyapunov functions (CLFs) and ISSf-CBFs in order to obtain a single control law that ensures both safety and stability in systems with input disturbances.},
	language = {en},
	number = {1},
	urldate = {2025-03-10},
	journal = {IEEE Control Systems Letters},
	author = {Kolathaya, Shishir and Ames, Aaron D.},
	month = jan,
	year = {2019},
	pages = {108--113},
	file = {PDF:/Users/rezapordal/Zotero/storage/CAZFJPY5/Kolathaya and Ames - 2019 - Input-to-State Safety With Control Barrier Functions.pdf:application/pdf},
}

@misc{wabersich_predictive_2021,
	title = {A predictive safety filter for learning-based control of constrained nonlinear dynamical systems},
	url = {http://arxiv.org/abs/1812.05506},
	doi = {10.48550/arXiv.1812.05506},
	abstract = {The transfer of reinforcement learning (RL) techniques into real-world applications is challenged by safety requirements in the presence of physical limitations. Most RL methods, in particular the most popular algorithms, do not support explicit consideration of state and input constraints. In this paper, we address this problem for nonlinear systems with continuous state and input spaces by introducing a predictive safety filter, which is able to turn a constrained dynamical system into an unconstrained safe system and to which any RL algorithm can be applied `out-of-the-box'. The predictive safety filter receives the proposed control input and decides, based on the current system state, if it can be safely applied to the real system, or if it has to be modified otherwise. Safety is thereby established by a continuously updated safety policy, which is based on a model predictive control formulation using a data-driven system model and considering state and input dependent uncertainties.},
	urldate = {2025-03-10},
	publisher = {arXiv},
	author = {Wabersich, Kim P. and Zeilinger, Melanie N.},
	month = may,
	year = {2021},
	note = {arXiv:1812.05506 [cs]},
	keywords = {Computer Science - Machine Learning, Computer Science - Systems and Control},
	file = {Full Text PDF:/Users/rezapordal/Zotero/storage/5HEB4I5Z/Wabersich and Zeilinger - 2021 - A predictive safety filter for learning-based control of constrained nonlinear dynamical systems.pdf:application/pdf;Snapshot:/Users/rezapordal/Zotero/storage/B4WJEPLY/1812.html:text/html},
}

@misc{li_learning_2023,
	title = {Learning {Predictive} {Safety} {Filter} via {Decomposition} of {Robust} {Invariant} {Set}},
	url = {http://arxiv.org/abs/2311.06769},
	doi = {10.48550/arXiv.2311.06769},
	abstract = {Ensuring safety of nonlinear systems under model uncertainty and external disturbances is crucial, especially for real-world control tasks. Predictive methods such as robust model predictive control (RMPC) require solving nonconvex optimization problems online, which leads to high computational burden and poor scalability. Reinforcement learning (RL) works well with complex systems, but pays the price of losing rigorous safety guarantee. This paper presents a theoretical framework that bridges the advantages of both RMPC and RL to synthesize safety filters for nonlinear systems with state- and actiondependent uncertainty. We decompose the robust invariant set (RIS) into two parts: a target set that aligns with terminal region design of RMPC, and a reach-avoid set that accounts for the rest of RIS. We propose a policy iteration approach for robust reachavoid problems and establish its monotone convergence. This method sets the stage for an adversarial actor-critic deep RL algorithm, which simultaneously synthesizes a reach-avoid policy network, a disturbance policy network, and a reach-avoid value network. The learned reach-avoid policy network is utilized to generate nominal trajectories for online verification, which filters potentially unsafe actions that may drive the system into unsafe regions when worst-case disturbances are applied. We formulate a second-order cone programming (SOCP) approach for online verification using system level synthesis, which optimizes for the worst-case reach-avoid value of any possible trajectories. The proposed safety filter requires much lower computational complexity than RMPC and still enjoys persistent robust safety guarantee. The effectiveness of our method is illustrated through a numerical example.},
	language = {en},
	urldate = {2025-09-09},
	publisher = {arXiv},
	author = {Li, Zeyang and Hu, Chuxiong and Zhao, Weiye and Liu, Changliu},
	month = nov,
	year = {2023},
	note = {arXiv:2311.06769 [eess]},
	keywords = {Computer Science - Machine Learning, Computer Science - Systems and Control, Electrical Engineering and Systems Science - Systems and Control},
	file = {PDF:/Users/rezapordal/Zotero/storage/6QLEX5UI/Li et al. - 2023 - Learning Predictive Safety Filter via Decomposition of Robust Invariant Set.pdf:application/pdf},
}

@book{boyd_linear_1994,
	address = {Philadelphia, Pa},
	series = {{SIAM} studies in applied mathematics},
	title = {Linear matrix inequalities in system and control theory},
	isbn = {978-0-89871-334-3 978-0-89871-485-2},
	language = {en},
	number = {15},
	publisher = {SIAM, Society for Industrial and Applied Mathematics},
	author = {Boyd, Stephen P. and El Ghaoui, Laurent and Feron, Eric and Balakrishnan, Venkataramanan},
	year = {1994},
	file = {khlebnikov2011:/Users/rezapordal/Zotero/storage/SNRBDMNF/khlebnikov2011.pdf:application/pdf;PDF:/Users/rezapordal/Zotero/storage/9KX8FYCL/Boyd et al. - 1994 - Linear matrix inequalities in system and control theory.pdf:application/pdf},
}

@article{khlebnikov_optimization_2011,
	title = {Optimization of linear systems subject to bounded exogenous disturbances: {The} invariant ellipsoid technique},
	volume = {72},
	copyright = {http://www.springer.com/tdm},
	issn = {0005-1179, 1608-3032},
	shorttitle = {Optimization of linear systems subject to bounded exogenous disturbances},
	url = {http://link.springer.com/10.1134/S0005117911110026},
	doi = {10.1134/S0005117911110026},
	abstract = {This survey covers a variety of results associated with control of systems subjected to arbitrary bounded exogenous disturbances. The method of invariant ellipsoids reduces the design of optimal controllers to ﬁnding the smallest invariant ellipsoid of the closed-loop dynamical system. The main tool of this approach is the linear matrix inequality technique. This simple yet versatile approach has high potential in extensions and generalizations; it is equally applicable to both the continuous and discrete time versions of the problems.},
	language = {en},
	number = {11},
	urldate = {2025-09-25},
	journal = {Automation and Remote Control},
	author = {Khlebnikov, M. V. and Polyak, B. T. and Kuntsevich, V. M.},
	month = nov,
	year = {2011},
	pages = {2227--2275},
	file = {PDF:/Users/rezapordal/Zotero/storage/UJ5E9S49/Khlebnikov et al. - 2011 - Optimization of linear systems subject to bounded exogenous disturbances The invariant ellipsoid te.pdf:application/pdf},
}

@mastersthesis{forster2015system,
  title={System identification of the crazyflie 2.0 nano quadrocopter},
  author={F{\"o}rster, Julian},
  type={{B.S.} thesis},
  year={2015},
  school={ETH Zurich}
}

@misc{cvx,
  author       = {CVX Research, Inc.},
  title        = {{CVX}: Matlab Software for Disciplined Convex Programming, version 2.0},
  howpublished = {\url{https://cvxr.com/cvx}},
  month        = aug,
  year         = 2012
}

@manual{mosek,
   author = "MOSEK ApS",
   title = "MOSEK API for MATLAB 11.0.29",
   year = 2025,
   url = "https://docs.mosek.com/latest/matlabapi/index.html"
 }

@article{polik2007survey,
  title={A survey of the S-lemma},
  author={P{\'o}lik, Imre and Terlaky, Tam{\'a}s},
  journal={SIAM review},
  volume={49},
  number={3},
  pages={371--418},
  year={2007},
  publisher={SIAM}
}

@article{lygeros2004reachability,
  title={On reachability and minimum cost optimal control},
  author={Lygeros, John},
  journal={Automatica},
  volume={40},
  number={6},
  pages={917--927},
  year={2004},
  publisher={Elsevier}
}

@misc{aubin2009set,
  title={Set-valued analysis. Modern Birkh{\"a}user Classics},
  author={Aubin, Jean-Pierre and Frankowska, H{\'e}l{\`e}ne},
  year={2009},
  publisher={Birkh{\"a}user Boston, Inc., Boston, MA}
}

@book{aubin2011viability,
  title={Viability theory: new directions},
  author={Aubin, Jean-Pierre and Bayen, Alexandre M and Saint-Pierre, Patrick},
  year={2011},
  publisher={Springer Science \& Business Media}
}

@misc{mcquinn2024runtimeassurancesimultaneous,
      title={Run Time Assurance for Simultaneous Constraint Satisfaction During Spacecraft Attitude Maneuvering}, 
      author={Cassie-Kay McQuinn and Kyle Dunlap and Nathaniel Hamilton and Jabari Wilson and Kerianne L. Hobbs},
      year={2024},
      eprint={2402.14723},
      archivePrefix={arXiv},
      primaryClass={eess.SY},
      url={https://arxiv.org/abs/2402.14723}, 
}

@misc{ong2026safespaceaggregatingsafesets,
      title={SafeSpace: Aggregating Safe Sets from Backup Control Barrier Functions under Input Constraints}, 
      author={Pio Ong and David E. J. van Wijk and Massimiliano de Sa and Joel W. Burdick and Aaron D. Ames},
      year={2026},
      eprint={2604.03536},
      archivePrefix={arXiv},
      primaryClass={eess.SY},
      url={https://arxiv.org/abs/2604.03536}, 
}

@ARTICLE{danielson2022invariant,
  author={Danielson, Claus and Kloeppel, Joseph and Petersen, Christopher},
  journal={IEEE Control Systems Letters}, 
  title={Spacecraft Attitude Control Using the Invariant-Set Motion-Planner}, 
  year={2022},
  volume={6},
  number={},
  pages={1700-1705},
  keywords={Space vehicles;Safety;Quaternions;Attitude control;Optimization;Lyapunov methods;Torque;Optimization;aerospace;constrained control;network analysis and control},
  doi={10.1109/LCSYS.2021.3132457}}

@inproceedings{xiao2019control,
  title={Control barrier functions for systems with high relative degree},
  author={Xiao, Wei and Belta, Calin},
  booktitle={2019 IEEE 58th conference on decision and control (CDC)},
  pages={474--479},
  year={2019},
  organization={IEEE}
}

@inproceedings{nguyen2016exponential,
  title={Exponential control barrier functions for enforcing high relative-degree safety-critical constraints},
  author={Nguyen, Quan and Sreenath, Koushil},
  booktitle={2016 American Control Conference (ACC)},
  pages={322--328},
  year={2016},
  organization={IEEE}
}

@article{ames2016control,
  title={Control barrier function based quadratic programs for safety critical systems},
  author={Ames, Aaron D and Xu, Xiangru and Grizzle, Jessy W and Tabuada, Paulo},
  journal={IEEE Transactions on Automatic Control},
  volume={62},
  number={8},
  pages={3861--3876},
  year={2016},
  publisher={IEEE}
}


\begin{IEEEbiography}[{\includegraphics[width=1in,height=1.25in,clip,keepaspectratio]{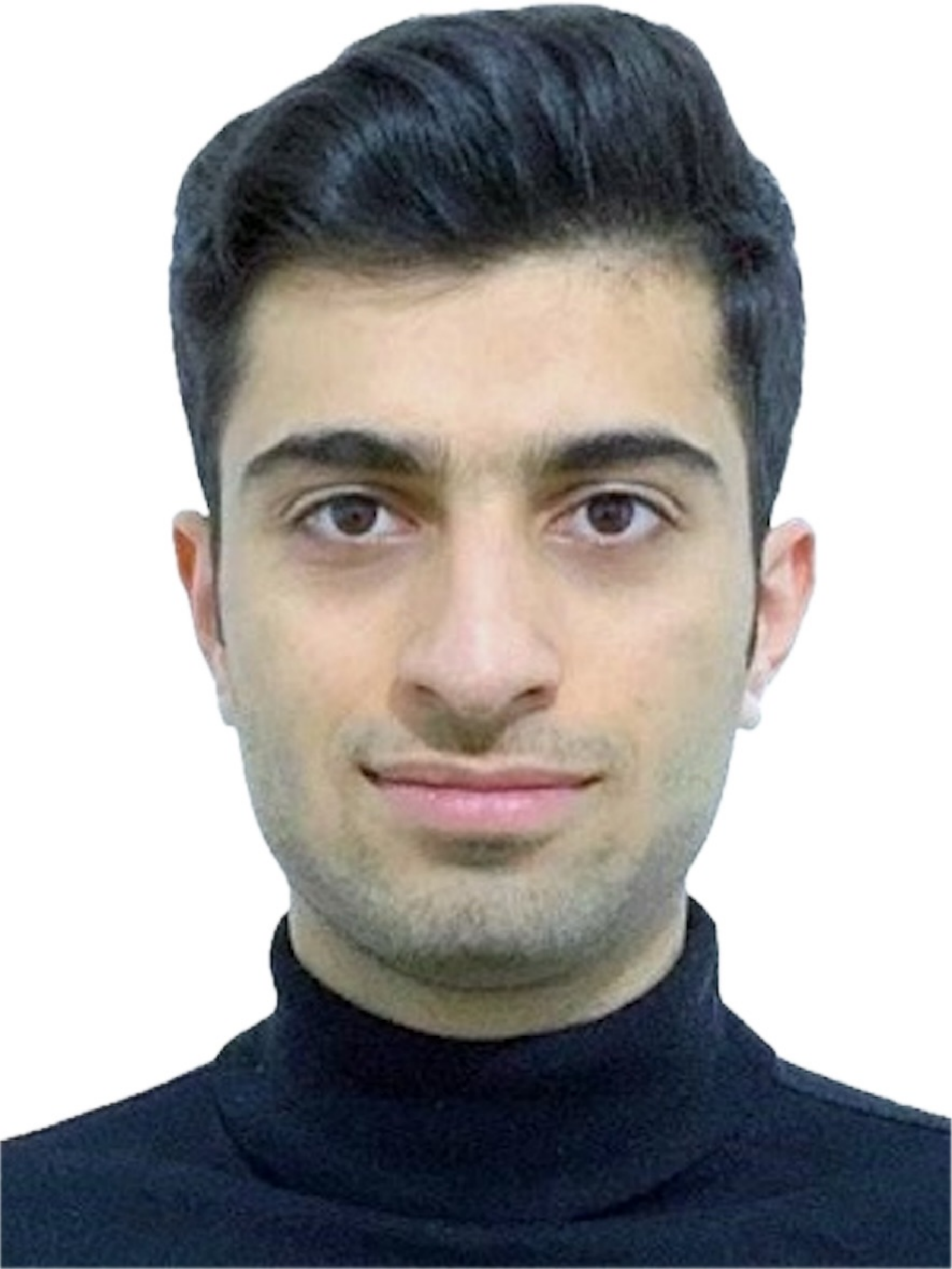}}]{Reza Pordal} completed his B.Sc.\ and M.Sc.\ degrees in Aerospace Engineering at Sharif University of Technology, Tehran, Iran, in 2021 and 2024, respectively. He received the Iranian Best B.Sc.\ Thesis Award from the Iranian Aerospace Society in 2022. He is currently a Research Assistant at the CNAV Laboratory, Sharif University of Technology. His research interests include safe and robust control of uncertain systems, safety-critical aerospace applications, and robotics for environmental purposes.
\end{IEEEbiography}

\begin{IEEEbiography}[{\includegraphics[width=1in,height=1.25in,clip,keepaspectratio]{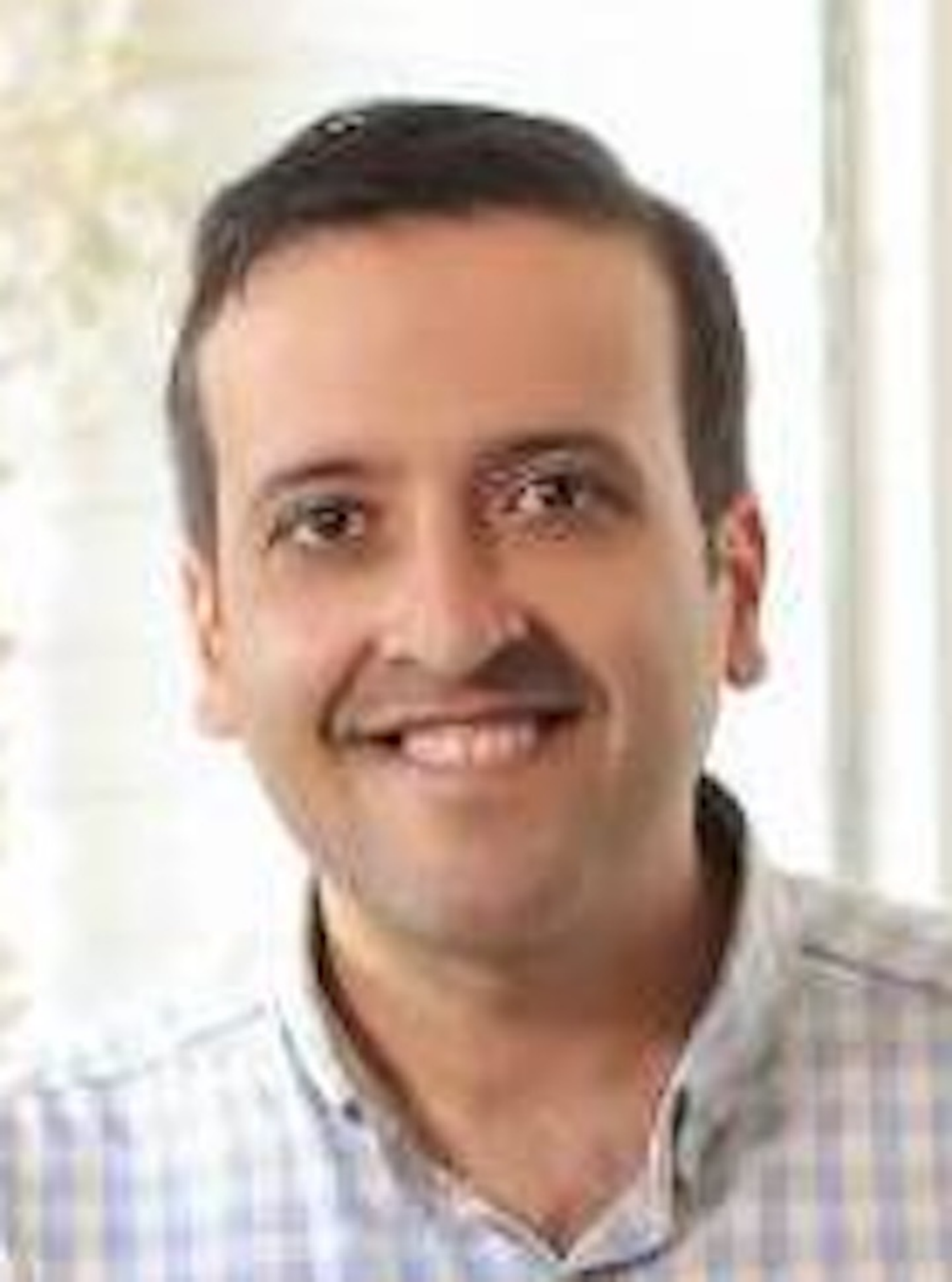}}]{Alireza Sharifi} received his M.Sc., and Ph.D. degrees in Aerospace Engineering from Sharif University of Technology, Tehran, Iran. He is currently an Assistant Professor in the Department of Aerospace Engineering at Sharif University of Technology. His research interests include air navigation, cooperative navigation, flight dynamics, and control theory. He has received multiple awards, including the Iranian Best PhD Thesis Award of the Year and the Outstanding Young Teacher Award from Sharif University of Technology.
\end{IEEEbiography}

\begin{IEEEbiography}[{\includegraphics[width=1in,height=1.25in,clip,keepaspectratio]{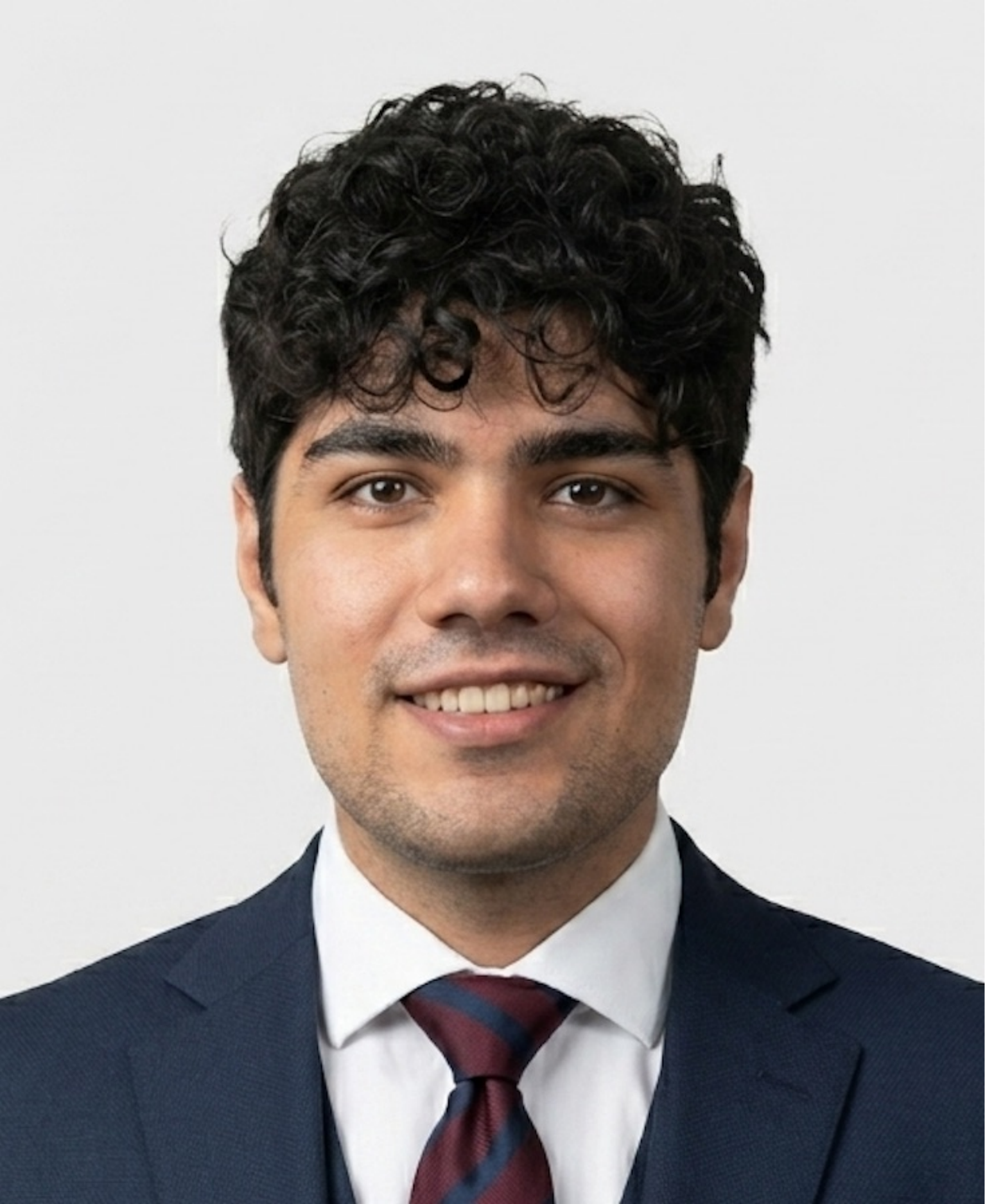}}]{Ali BaniAsad}
 received the B.Sc. degree in Aerospace Engineering from Sharif University of Technology, Tehran, Iran, in 2022, where he was the recipient of the Best Undergraduate Thesis Award. He received the M.Sc. degree in Aerospace Engineering from the same institution in 2025, with a dissertation on embedded reinforcement learning for robust control of robotic systems under adversarial disturbances. From 2024 to 2025, he was a Research and Development Engineer with Fasta Robotics, where his work concerned the design of autopilot control architectures and the integration of learning-based methods with classical guidance, navigation, and control for robots.
His research interests include reinforcement learning, robust and optimal control, multi-agent systems, differential games, and the deployment of learning-based controllers on embedded platforms. He has authored several peer-reviewed publications in these areas.
\end{IEEEbiography}

\vfill

\end{document}